\newlength{\mytopmargin}
\newlength{\myleftmargin}
\providecommand{\keywords}[1]
{
  \small	
  \textbf{{Keywords:}} #1
}
\providecommand{\subjclass}[1]
{
  \small	
  \textbf{{MSC 2010:}} #1
}
\newcommand{\half}{\tfrac{1}{2}}
\newcommand{\quarter}{\tfrac{1}{4}}
\newcommand{\tC}{\tilde{t}}
\def\mean#1{\left \langle #1 \right \rangle}
 \def\gtapprox{\buildrel{\lower.7ex\hbox{$>$}}\over {\lower.7ex\hbox{$\sim$}}}
\renewcommand{\theequation}{\thesection.\arabic{equation}}
\newtheorem{lemma}{Lemma}
\newtheorem{prop}[lemma]{Proposition}
\numberwithin{equation}{section}
\newcommand\numberthis{\addtocounter{equation}{1}\tag{\theequation}}
\begin{document}

\begin{center}
{\bfseries\Large Leading corrections to the scaling function on the \\[0.4ex] 
 diagonal for the  two-dimensional Ising model}
\\[2\baselineskip]
{\large P. J. Forrester\footnote{pjforr@unimelb.edu.au}}, %
\\[.5\baselineskip]
{\itshape ARC Centre of Excellence for Mathematical and Statistical Frontiers,\\
School of Mathematics and Statistics, The University of Melbourne, Victoria 3010, Australia.}
\\[\baselineskip]
{\large J. H. H. Perk\footnote{perk@okstate.edu}}
\\[.5\baselineskip]
{\itshape Department of Physics, 145 Physical Sciences, Oklahoma State University, Stillwater, Oklahoma 74078-3072, USA}
\\[\baselineskip]
{A. K. Trinh\footnote{a.trinh4@student.unimelb.edu.au}}
\\[.5\baselineskip]
{\itshape ARC Centre of Excellence for Mathematical and Statistical Frontiers,\\
School of Mathematics and Statistics, The University of Melbourne, Victoria 3010, Australia.}
\\[\baselineskip]
{\large N. S. Witte\footnote{n.s.witte@massey.ac.nz}}
\\[.5\baselineskip]
{\itshape Institute of Fundamental Sciences, Massey University,\\
Private Bag 11222, Palmerston North 4410, New Zealand}
\end{center}

\date{\today}


\begin{abstract}
In the neighbourhood of the critical point, the correlation length of the spin-spin correlation function 
of the two-dimensional Ising model diverges.
The correlation function permits a scaling limit in which the separation $N$ between spins goes to infinity, but the
scaling variable $s =  N(1-t)/2$ remains fixed, where $t$ is the coupling, and $t=1$ the
critical point. Previous work has specified these scaling functions (there is one for the critical point
being approached from above, and another if approached from below) in terms of 
transcendents defined by a particular $\sigma$-form of the degenerate Painlev\'e V equation. 
For the diagonal-diagonal correlation, we characterise the first two leading large
$N$ correction terms to the scaling functions --- these occur at orders $N^{-1}$ and $N^{-2}$ --- 
in terms of solutions of a second order linear differential equation with coefficients given in terms of these transcendents, 
and show how they can be computed. 
We show that the order $N^{-1}$ is trivial and can be eliminated through appropriate variables so that the leading non-trivial correction is of order $N^{-2}$.  
In this respect our result gives precise and full characterisation of claims made in the earlier literature.
\end{abstract}

\noindent
\subjclass{82B44, 82B27, 82B23, 33E17, 34M55}
\newline
\keywords{Ising model, Exactly solvable models, Criticality, Correlations, Painlev\'e equations}

\section{Introduction}

The two dimensional Ising model holds a central place in mathematical and theoretical physics as a microscopic model of a ferromagnetic
phase transition that allows for exact mathematical analysis. Generally, the theory of statistical mechanics allows for
macroscopic properties of physical systems to be computed from knowledge of the microscopic interactions. In the
case of the ferromagnetic Ising model on a square lattice, where on each lattice site there is an up or down (classical) spin, the
interactions are between nearest neighbours and favour the alignment of spins. In a famous calculation which dates back
to the 1940's Onsager \cite{On44} derived an exact formula for the free energy in the thermodynamic limit. 
It exhibits a critical point, which shows itself as a singularity as a function of the dimensionless coupling.
In addition Onsager and Kaufman found an exact formula for the spontaneous magnetisation in terms of this coupling. 
In the early 1950s C.N.~Yang \cite{Ya52} gave the derivation for this result, exhibiting the experimentally observable
\cite{BWVRMP95} $1/8$-th power singularity as the critical temperature is approached from above. 
The general anisotropic case of the magnetisation was given by Onsager \cite{On49}, and later derived by Chang \cite{Cha_1952} and Potts \cite{Pot_1952}.
A Toeplitz structure for the two-point correlations appears in Eq. (45) of Kaufman and Onsager's 1949 work \cite{KO_1949} on the short-range order in the planar Ising model.
Subsequently Montroll, Potts and Ward \cite{MPW63} gave a Toeplitz determinant formula for the two-point correlation function in the 1960's. 
In the case of the diagonal-diagonal two-point correlation, a Toeplitz determinant formula was already known to Onsager, but its derivation was not published;
see the historical accounts in \cite{Ba11, Ba12, DIK13}.
It is the latter which forms the starting point of the present study.

The Toeplitz determinant formula for the diagonal-diagonal two-point correlation of the two-dimensional Ising model on
a square lattice relates to an $N \times N$ matrix where $N$ is the number of lattice sites that separate the spins.
Above and below the critical point the truncated correlation function decays exponentially fast. But the correlation
length diverges as the critical point is approached, and this in turn leads to the notion of a scaling limit, in which $N$ goes
to infinity, while the product $N(1-t)$ is fixed, with $t$ the scaled coupling such that
$t \to 1$ corresponding to the critical point.  Already established
mathematical results tell us that the Toeplitz determinant can be expanded as an infinite series known as the form
factor expansion \cite{Wu66,WMTB76,Bu01,BL04,LM07,BHMMZ07,WF12}.
Each term in this expansion is a multiple integral of increasing dimension. The scaling limit of this
can be taken term by term, giving the form factor expansion of the scaling limit. On the other hand, another already
established mathematical result is that the Toeplitz determinant can be characterised in terms of the solution of a
Painlev\'e VI non-linear differential equation in so called sigma form --- a result due to Jimbo and Miwa
\cite{JM80, JM81}, with the latter reducing to the degenerate Painlev\'e V equation in sigma form in the scaling limit.

The scaling regime has been investigated by numerous authors, and in particular we want to focus on the influential and pioneering work of Wu et al \cite{WMTB76},
referred to hereafter as WMTB.
Their work treated the non-diagonal correlations on the anisotropic lattice so our comparison will just be for a specialisation of this.
Their primary result concerned the explicit evaluation of the zeroth order term in the scaling regime with a solution of the third Painlev\'e equation (equivalent to the degenerate PV), 
but they did make some observations and claims about the nature of the next correction to this. 
In the following discussion we are going examine their claims and will assume an anisotropic lattice as well. 
The following notations are standard, and common to WMTB; for the generating function variables of the partition function 
\begin{equation}
   z_1 = \tanh \beta E_1, \quad z_2 = \tanh \beta E_2 ,
\end{equation}
in terms of the couplings between neighbouring spins $ K_j=\beta E_j, j=1,2 $ in the horizontal and vertical directions;
and the coefficients of the dispersion relation on the square lattice
\begin{equation}
   a = (1+z_1^2)(1+z_2^2), \quad \gamma_1 = 2z_2(1-z_1^2), \quad \gamma_2 = 2z_1(1-z_2^2) .
\end{equation}
The ferromagnetic critical point is given by either of the relations
\begin{equation}
   z_{1C}z_{2C}+z_{1C}+z_{2C}-1 = 0, \quad a_{C}-\gamma_{1C}-\gamma_{2C} = 0 .
\end{equation}
The deviation of the inverse temperature from the critical value is measured by $ \Delta \beta := \beta-\beta_{C} $.
The symmetrised spatial separation variable for the correlation $ \langle s_{0,0}s_{M,N} \rangle $ is denoted by $ R $ where
\begin{equation}
  R^2 = \left( \frac{\sinh 2\beta E_1}{\sinh 2\beta E_2} \right)^{1/2}M^2 + \left( \frac{\sinh 2\beta E_2}{\sinh 2\beta E_1} \right)^{1/2}N^2 .
\end{equation}
The independent variable used by WMTB in the critical regime is denoted here by $ \tC $ where
\begin{equation}
  \tC = | z_{1}z_{2}+z_{1}+z_{2}-1|\left[ z_1z_2(1-z_1^2)(1-z_2^2) \right]^{-1/4}R .
\end{equation}
In contrast the PVI $t$-variable, employed throughout our study, has the form $ T<T_C $
\begin{multline}
   t-1 = \frac{1}{16z_1^2z_2^2}
\\ \times
   \left[ 1-z_{1}-z_{2}-z_{1}z_{2} \right]\left[ 1+z_{1}+z_{2}-z_{1}z_{2} \right]\left[ 1-z_{1}+z_{2}+z_{1}z_{2} \right]\left[ 1+z_{1}-z_{2}+z_{1}z_{2} \right] .
\end{multline}
Performing expansions as $ \Delta\beta\to 0 $ we note that the WMTB variable and our $ s:=N(1-t)/2 $ are related by
\begin{equation}
   \tC = |s| + {\rm O}(\Delta\beta^2) .
\end{equation} 

WMTB express the separation of the large distance and scale free dependencies of the pair correlation as a large $ R $ expansion
\begin{equation}
   \langle s_{0,0}s_{M,N} \rangle = R^{-1/4} F_{\pm}(\tC) + R^{-5/4} F_{1\pm}(\tC) + {\rm o}(R^{-5/4}) .
\end{equation}
In the summary section WMTB make the claim, see Eq. (2.24), that
\begin{equation}
   \frac{F_{1+}(\tC)}{F_{+}(\tC)} = - \frac{F_{1-}(\tC)}{F_{-}(\tC)} = -\tC R_1
\end{equation} 
where $ R_1 $ is given as an algebraic expression of $ z_{1C}, z_{2C}, E_1, E_2 $, which is independent of $ \tC $.
Furthermore this has consequences for the magnetic susceptibility, which has the expansion about the critical point
\begin{equation}
  \beta^{-1} \chi(T) = C_{0\pm} | 1-T_{C}/T |^{-7/4} + C_{1\pm} | 1-T_{C}/T |^{-3/4} + {\rm O}(1) ,
\end{equation}
where the ratio of the sub-leading to leading coefficients is
\begin{equation}
   \frac{C_{1+}}{C_{0+}} = -\frac{C_{1-}}{C_{0-}} = -\beta_C R_0 ,
\end{equation}
with $ R_0 $ given as another expression of $ z_{1C}, z_{2C}, E_1, E_2 $.

The question addressed in the present work is to characterise the leading corrections to the scaling limit as the solutions of a differential equation. 
Such a question was first raised by WMTB, in the more general context of the two-point correlation in general position.
It is found in Proposition \ref{P1} that the leading corrections --- which appear at orders $N^{-1}$ and $N^{-2}$ --- can be characterised as solutions of linear second order
differential equations, which have as their coefficients the Painlev\'e transcendents characterising the scaling limiting form itself.
Equivalently, the leading corrections to the scaling limit can be characterised as coupled differential systems, involving
both a particular Painlev\'e V equation in sigma form, and a second order linear differential equation. 
This structure has been seen in a number of other recent studies involving Painlev\'e transcendents characterising 
finite size corrections \cite{FM15, BFM17, FT17}.

The characterisation becomes unique once boundary conditions for the equation are specified. 
This task is carried out for both the small and large values of the scaling variable $s$ using the known series expansions about $t=1$ (\S \ref{S3.3})
and a form factor expansion (\S \ref{S3.2}) respectively.
The computation of the small $s$ expansion using known series expansions is particularly interesting. 
It requires a double scaling with the introduction of the scaling variable.
As a result, the explicit form of the series expansion solution of the coupled differential equations can be determined up to arbitrary order, subject to the capacity of the computer
algebra system used for the purpose (\S \ref{s3.4} and Appendix A). 
Using the series form, accurate numerical values of the scaling function and its first two corrections can be made up to sufficiently large values of $s$ that they can be joined up with the large $s$ asymptotic form, without the need to actually make use of the differential equations in this regime. This is carried out in \S \ref{S4}.
For both the scaling function and its leading corrections accurate numerical values can be computed simply by the extrapolation of values from the Toeplitz determinant for a sequence of
values of $N$, allowing for a numerical validation of our analytic results.

\section{Preliminaries}
\subsection{Some definitions and Onsager's Toeplitz formula}
To specify the Ising model in two dimensions, we start with a square lattice of size $(2M+1) \times (2M+1)$,
centred at the origin so that that nodes $(i,j)$ are pairs of integers with $-M \le i,j \le M$. On each node of the
square lattice, there is an associated spin $s_{i,j} \in \{-1,1\}$. The spins interact with their nearest neighbours
in the horizontal and vertical directions according to the dimensionless interaction energy
$$
\beta \mathcal E = - K_1 \sum_{i=-M}^{M-1} \sum_{j=-M}^M s_{i,j} s_{i+1,j} -
K_2 \sum_{i=-M}^{M} \sum_{j=-M}^{M-1} s_{i,j} s_{i,j+1} .
$$
Our interest is in the ferromagnetic case $K_1, K_2 > 0$ which for low temperatures favours neighbouring spins to align.

The corresponding partition function is
$$
Z_{2M+1} = \sum_{\{ s_{i,j} \}} \exp(-\beta  \mathcal E).
$$
This occurs in the normalisation of the formula for the probability $P(\{ s_{i,j} \}_{i,j=-M}^M)$
of a particular configuration
$\{ s_{i,j} \}_{i,j=-M}^M$,
$$
P(\{ s_{i,j} \}_{i,j=-M}^M) = { e^{-\beta \mathcal E} \over Z_{2M+1}}.
$$
The probability in turn occurs in the formula for the infinite lattice spontaneous magnetisation
\begin{equation}\label{2.1}
\mathcal M = \langle s_{0,0} \rangle =
\lim_{M \to \infty}  \sum_{\{ s_{i,j} \}}  s_{0,0} P(\{ s_{i,j} \}_{i,j=-M}^M).
\end{equation}

There is some subtlety in relation to (\ref{2.1}). With
\begin{equation}\label{2.2}
k = \sinh 2 K_1 \sinh 2 K_2 ,
\end{equation}
there is phase transition at $k=1$ separating a high temperature phase $0 < k < 1$, with zero
spontaneous magnetisation, from a low temperature phase $k > 1$ for which this order parameter is non-zero. 
The subtlety is that whether this is a positive value, or negative value, depends on the boundary condition: we choose all spins on the boundary to be pointing up, 
so that the limiting value will be positive. 
With this convention \cite{Ya52} 
\begin{equation}\label{2.3}
\mathcal M = \left \{ \begin{array}{ll} (1 - k^{-2})^{1/8}, & k > 1 \\
0, & 0 < k < 1 \end{array} \right. .
\end{equation}

The spin-spin correlation, between the spin $s_{0,0}$ at the origin, and the spin $s_{m,n}$ at
lattice site $(m,n)$, is for the infinite lattice defined
\begin{equation}\label{2.4}
 \langle s_{0,0}  s_{m,n} \rangle =
\lim_{M \to \infty}  \sum_{\{ s_{i,j} \}}  s_{0,0} s_{m,n} P(\{ s_{i,j} \}_{i,j=-M}^M),
\end{equation}
again with the convention that all spins on the boundary are to be pointing up.
According to Onsager  \cite{Ba11, Ba12,  DIK13}, in the diagonal case $m=n(=N)$
\begin{equation}\label{2.4a}
 \langle s_{0,0}  s_{N,N} \rangle = \det [ a_{i-j} ]_{1 \le i,j\le N},
 \end{equation}
 where the elements are given as the Fourier coefficients 
\begin{equation}\label{2.5} 
a_n = {1 \over 2 \pi} \int_{-\pi}^\pi a(e^{i \theta}) e^{-i n \theta} \, d \theta,
\end{equation}
 with the weight (here $k$ is given by (\ref{2.2}))
\begin{equation}\label{2.6} 
a(\zeta) = \left[ \frac{1 - k^{-1} \zeta^{-1}}{1 - k^{-1} \zeta} \right]^{1/2}, \quad \zeta:=e^{i \theta} .
\end{equation}

From \cite{Wi07} we know that as a consequence of the integral representation of the
${}_2 F_1$ hypergeometric function
$$
{}_2 F_1 (a,b;c;z) = {\Gamma(c) \over \Gamma(c-b) \Gamma(b)}
\int_0^1 x^{b-1} (1 - x)^{c-b-1} (1 - x z)^{-a} \, dx \qquad ({\rm Re} \, c > {\rm Re} \, b > 0),
$$
one has for $1 < k < \infty$
\begin{align}
a_n & = {\Gamma(n+1/2) \Gamma(1/2) \over \pi \Gamma(n+1)} k^{-n} \,{}_2 F_1(-1/2, n+1/2; n+1; k^{-2} ) , \nonumber \\
a_{-n} & = -{\Gamma(n-1/2) \Gamma(3/2) \over \pi \Gamma(n+1)} k^{-n} \,{}_2 F_1(1/2, n -1/2; n+1; k^{-2} ) , \label{F1}
\end{align}
while for $0 \le k < 1$,
\begin{align}
a_n & = {\Gamma(n+1/2) \Gamma(3/2) \over \pi \Gamma(n+2)} k^{n+1} \,{}_2 F_1(1/2, n+1/2; n+2; k^{2} ) , \nonumber \\
a_{-n} & =  -{\Gamma(n-1/2) \Gamma(1/2) \over \pi \Gamma(n)} k^{n-1} \,{}_2 F_1(-1/2, n-1/2; n; k^{2} ). \label{F2}
\end{align}
The formulas for $a_n$ hold for $n \ge 0$, and those for $a_n$ hold for $n \ge 1$.

\subsection{Form factor expansion}
Introduce the variable
\begin{equation}\label{2.7}
t = \left \{ \begin{array}{ll} k^{-2}, & k > 1 \\
k^2, & 0 < k < 1 \end{array} \right. ,
\end{equation}
which is always $ 0<t<1 $ and furthermore write
\begin{equation}\label{2.7a}
C_{N,N}(t) = \langle s_{0,0}  s_{N,N} \rangle.
\end{equation}
The so-called form factor expansions \cite{Wu66,WMTB76,Bu01,BL04,LM07,BHMMZ07,WF12}
are infinite sums involving multiple integrals of increasing dimension, expressing
$ \langle s_{0,0}  s_{N,N} \rangle$ in a form giving immediate information relating to the small
$t$ power series. These read
\begin{equation}\label{1.4.9}
\mean{s_{0,0}s_{N,N}} = (1-t)^{\quarter} \Bigg(1+\sum_{p=1}^{\infty} f^{(2p)}_{N,N} \Bigg),
\quad 
\mean{s_{0,0}s_{N,N}} = (1-t)^{\quarter} \sum_{p=0}^{\infty} f^{(2p+1)}_{N,N},
\end{equation}
for $T<T_C$ and $T>T_C$ respectively. Here
\begin{align*} \label{eq:lowformfac}
f^{(2p)}_{N,N} = \frac{t^{p(N+p)}}{(p!)^2 \pi^{2p}} \int_0^1 dx_1 \cdots \int_0^1 dx_{2p}
&\prod_{k=1}^{2p} x^N_k \prod_{j=1}^p \Bigg[ \frac{(1-tx_{2j})(x^{-1}_{2j}-1)}{(1-tx_{2j-1})(x^{-1}_{2j-1}-1)} \Bigg]^{\half} \\
& \times \prod_{j=1}^{p} \prod_{k=1}^{p} (1-tx_{2k-1}x_{2j})^{-2} \\ & \times \prod_{1 \leq j<k \leq p} (x_{2j-1}-x_{2k-1})^2(x_{2j}-x_{2k})^2 ,
\numberthis
\end{align*} 
and
\begin{align*} \label{eq:highformfac}
f^{(2p+1)}_{N,N} = &\frac{t^{N(p+1/2)+p(p+1)}}{p!(p+1)!\pi^{2p+1}} \int_0^1 dx_1 \cdots \int_0^1 dx_{2p+1} 
\\
& \times \prod_{k=1}^{2p+1} x^N_k \prod_{j=1}^{p+1} \frac{1}{x_{2j-1}}(1-tx_{2j-1})^{-\half}(x^{-1}_{2j-1}-1)^{-\half}
\\
& \times \prod_{j=1}^p x_{2j}(1-tx_{2j})^{\half}(x^{-1}_{2j}-1)^{\half}  \prod_{j=1}^{p+1} \prod_{k=1}^p (1-tx_{2j-1}x_{2k})^{-2}
\\
& \times \prod_{1 \leq j<k \leq p+1} (x_{2j-1}-x_{2k-1})^2
\prod_{1 \leq j<k \leq p} (x_{2j}-x_{2k})^2 .
\numberthis
\end{align*}

For the implied small $t$ expansions we have \cite{Gho_2005}, \cite{FW_2005}, \cite{PA_2009}, \cite{WF12}
\begin{equation}\label{2.8}
C_{N,N}(t)  = 
 \left \{ \begin{array}{ll} \displaystyle
(1 - t)^{1/4} +
{(1/2)_N (3/2)_N \over 4 ((N+1)!)^2} t^{N+1} \Big ( 1 + {\rm O}(t) \Big ), &  k > 1 \\
 \displaystyle 
 {1 \over \sqrt{\pi}} {\Gamma(N+1/2) \over \Gamma(N+1)} t^{N/2}  \Big ( 1 + {\rm O}(t) \Big ), & k < 1.
\end{array} \right.
\end{equation}

\subsection{$\sigma$ function and Painlev\'e VI}
Define the $\sigma$-function
\begin{equation}\label{2.8a}
\sigma(t;N) =  
    \begin{cases}
     \displaystyle t(t-1) \dv{t} \log \langle s_{0,0} s_{N,N} \rangle - {1 \over 4} t, & k > 1 \\[2ex]
     \displaystyle t(t-1) \dv{t} \log \langle s_{0,0} s_{N,N} \rangle  - {1 \over 4}, & 0 < k < 1
    \end{cases} ,
\end{equation}
where $t$ is specified in terms of $k$ according to (\ref{2.7}). It was shown by Jimbo and Miwa \cite{JM80, JM81}
(see \cite{FW04} for a different derivation) that $\sigma(t;N)$ satisfies the particular $\sigma$-form of Painlev\'e VI
(for an account of the latter, see e.g. \cite[\S 8.2]{Fo10})
\begin{align*} 
\bigg[ t(t-1)\dv[2]{\sigma}{t} \bigg]^2 = N^2 &\bigg[ (t-1)\dv{\sigma}{t} - \sigma \bigg]^2 
\\
&- 4\dv{\sigma}{t}\bigg[(t-1)\dv{\sigma}{t}-\sigma-\quarter\bigg] \bigg[t\dv{\sigma}{t}-\sigma\bigg].
\numberthis
\label{eq:pvi}
\end{align*}
We should emphasize that our $t$ is the inverse of Jimbo-Miwa's $t$, that is $t=1/t_{\mathrm{JM}}$.
To be consistent with (\ref{2.8}) we require the boundary conditions
\begin{equation}\label{2.9}
\sigma(t;N) \mathop{\sim}\limits_{t \to 0}
 \begin{cases}
  \displaystyle -\frac{(1/2)_N (3/2)_N}{4(N+1)!N!} t^{N+1} + {\rm O} (t^{N+2}), & k > 1 \\[2ex]             
  \displaystyle - {N \over 2}  - {1 \over 4} + {\rm O} (t) , & 0 < k < 1
 \end{cases} .
\end{equation}

\subsection{The two-point correlation at criticality and its large $N$ expansion}

It is well known that the Toeplitz determinant (\ref{2.4a}) simplifies when $k=1$ \cite{MW73}. The Fourier
coefficients then permit the evaluation $a_n = 1/(\pi (n+1/2))$. Use of the Cauchy double alternant determinant
(see e.g.~\cite[Eq.~(4.33)]{Fo10}) then allows the Toeplitz determinant to be evaluated to give
\begin{equation}\label{2.10}
 \langle s_{0,0}  s_{N,N} \rangle  \Big |_{k=1} =  C_{N,N}(1) =
 \Big ( {2 \over \pi} \Big )^N \prod_{p=1}^{N-1} \Big ( 1 - 
 {1 \over 4 p^2} \Big )^{p-N}.
 \end{equation}
 
Significant for the interpretation of some of our future working is that (\ref{2.10}) has the large $N$ expansion
\cite{MW73}, or Eq. (5) of \cite{AP_1984} for all orders explicitly
\begin{equation}\label{2.11}
 \langle s_{0,0}  s_{N,N} \rangle  \Big |_{k=1}   \mathop{\sim}\limits_{N \to \infty}
 {A \over N^{1/4}} \Big ( 1 - {1 \over 64 N^2} + {\rm O}(N^{-4}) \Big ),
\end{equation}
where, with $\zeta'(z)$ the derivative of the Riemann zeta function,
\begin{equation}\label{2.12}
  A = 2^{1/12} \exp(3 \zeta'(-1)).
\end{equation} 
Such an expansion for the non-diagonal correlations has also been given in \cite{AP_1984}.

\section{Scaling limit about criticality}
\subsection{Scaling variable and scaling function}
With $t$ defined as in (\ref{2.7}), we know that the model is critical for $t=1$. To quantify the meaning of this
in relation to the diagonal two point correlation function, we note from (\ref{2.8}) that for $t < 1$ the latter assumes
its limiting value exponentially fast in $N$. This suggests the introduction of a correlation length $\xi_{\pm}$ by
setting (see e.g.~\cite[Eq.~(10.114)]{Mc12})
\begin{equation}\label{2.13}
  t^N = e^{- \sqrt{2} N / \xi_{\pm}} ,
\end{equation}
(here one should interpret $\sqrt{2} N$ as the distance from the origin of $(N,N)$ on the square lattice). 
As $t \to 1^-$ it then follows that
\begin{equation}\label{corr_length}  
 \xi^{\pm} = \frac{\sqrt{2}}{|\log t|} \mathop{\sim}\limits_{t \to 1^{-}} \frac{\sqrt{2}}{|1-t|} .
\end{equation}
 
Next, for $N$ large,  introduce the scaled lattice position $(N,N) \mapsto (n,n)$ by $n=N/(2 \xi_{\pm})$, and the corresponding
distance to the scaled coordinate $(n,n)$ by $s = \sqrt{2} n$. It then follows from (\ref{corr_length}) that as
$t \to 1^-$ and simultaneously $N \to \infty$ (the scaling limit) that
\begin{equation}\label{scaling-var} 
 s = \frac{N(1-t)}{2} .
\end{equation}
Our subsequent interest is in the form of  the diagonal two point correlation function as a function of $s$ in the scaling limit.

However a significant issue is which scaling variable, $ s $ or $ N/\xi^{\pm} $, better reflects the true behaviour of the correlation
even though either are acceptable and identical at the lowest order.
Firstly let us define the diagonal scaling function $ G^{\pm}(t;N) $ by
\begin{equation}
  \langle s_{0,0}s_{N,N} \rangle =:\;(1-t)^{1/4}\;G^{\pm}(t;N) .
\end{equation}
For present purposes existing results in the variable $s$ are most informative. 
From Eqs. (10) and (11) in \cite{AP_2003} we know the diagonal scaling function exactly up to order ${\rm O}(N^{-2})$:
\begin{align}
\langle s_{0,0}s_{N,N} \rangle
=\;&|1-k^{-2}|^{1/4}\;\left( G^{\pm}(N |\ln k|)+N^{-2}G^{(2)\pm}(N |\ln k|)+{\rm O}(N^{-4}) \right)
\nonumber\\
=\;&|1-t^{\mp 1}|^{1/4}\;
\bigg[G^{\pm}\left(N\left\{\tfrac{1}{2}(1-t)+\tfrac{1}{4}(1-t)^2 +\tfrac{1}{6}(1-t)^3+\ldots\right\}\right)
\nonumber\\
&\mbox{}\hskip50mm
+N^{-2}G^{(2)\pm}\left(N\left\{\tfrac{1}{2}(1-t)+\ldots\right\}\right)+\ldots\bigg]
\nonumber\\
=\;&(1-t)^{1/4}\left[1+(1\pm 1)\frac{s}{4N}+(1\pm 1)\frac{5s^2}{16N^2}+\ldots\right]
\nonumber\\
&\qquad \times\bigg[G^{\pm}\left(s+\frac{s^2}{N}+\frac{4s^3}{3N^2}\ldots\right)+N^{-2}G^{(2)\pm}(s+\ldots)+\ldots\bigg]
\nonumber\\
=\;&(1-t)^{1/4}\left[G_0^{\pm}(s)+N^{-1}G_1^{\pm}(s)+N^{-2}G_2^{\pm}(s)+{\rm O}(N^{-3})\right].
\label{trueGdefn}
\end{align}
Here we used the exact diagonal correlation length $1/|\ln k|=-2/\ln t$. 
The $+$ sign in $ G^{\star \pm}_{\star} $ indicates $ T>T_C $, whereas the $-$ sign refers to $ T<T_C $ and the upper/lower signs match throughout a formula. 
Thus we find
\begin{align}
G_0^{\pm}(s)=\;
& G^{\pm}(s),
\label{G0-Id}\\
G_1^{\pm}(s)=\;
&s^2\frac{\mathrm{d}G^{\pm}(s)}{\mathrm{d}s}+\frac{1\pm1}{4}sG^{\pm}(s),
\label{G1-Id}\\
G_2^{\pm}(s)=\;
&\frac{1}{2}s^4\frac{\mathrm{d}^2G^{\pm}(s)}{\mathrm{d}s^2}+\left[\frac{4}{3}+\frac{1\pm1}{4}\right]s^3\frac{\mathrm{d}G^{\pm}(s)}{\mathrm{d}s}
 +\frac{5}{4}\times\frac{1\pm1}{4}s^2G^{\pm}(s)+G^{(2)\pm}(s).
\label{G2-Id}
\end{align}
Given $G_0^{\pm}(s)$, $G_1^{\pm}(s)$ and $G_2^{\pm}(s)$, we can find $G^{\pm}(s)$ and $G^{(2)\pm}(s)$ using \eqref{G0-Id} and \eqref{G2-Id},
or equivalently, also using \eqref{G1-Id},
\begin{align}
G^{\pm}(s)=\;
&G_0^{\pm}(s),
\label{trueG0}\\
G^{(2)\pm}(s)=\;
&G_2^{\pm}(s)-s^2\frac{\mathrm{d}G_1^{\pm}(s)}{\mathrm{d}s}-\frac{1}{4}sG_1^{\pm}(s)
 +\frac{1}{2}s^4\frac{\mathrm{d}^2G_0^{\pm}(s)}{\mathrm{d}s^2}+\frac{11}{12}s^3\frac{\mathrm{d}G_0^{\pm}(s)}{\mathrm{d}s}.
\label{trueG2}
\end{align}
Also $G^{(1)\pm}(s)\equiv G^{(3)\pm}(s)\equiv0$, and $G_3^{\pm}(s)$ can be expressed in $G^{(0)\pm}(s)\equiv G^{\pm}(s)$ and $G^{(2)\pm}(s)$.

With our aim to characterise the leading correction terms to the scaling function, we thus have the significant property that only the even inverse
powers in $N$ are independent.
Another clue pointing towards this conclusion is that the PVI sigma form only has $N^2$ as a parameter, and not $N$. 
However the symmetry $N \to -N$ is broken by the boundary conditions (see \eqref{2.9}) which don't have this symmetry, so terms $N^{-odd}$ orders still appear. 
But what this means is that the coefficients of the $N^{-odd}$ terms are trivially related to the $N^{-even}$ coefficients that appear in higher orders.

It is easy to see that Eq.(2.21) in WMTB is also correct to ${\rm O}(N^{-2})$ [they use ${\rm o}(N^{-1})$],
as the only difference is the use of a different correlation length good beyond first order. 
Therefore $\hat F_{1\pm}(t)\equiv 0$ also.
However, expanding $\langle s_{0,0}s_{N,N} \rangle$ as in Eq.(2.22), $F_{\pm}(t)$ differs from $\hat F_{\pm}(t)$
by a factor $|1-k^{-2}|^{1/4}/R^{-1/4}$, leading to Eqs.(2.23)--(2.25) in WMTB. Now
\begin{equation}
F_{1\pm}(t)=\mp tR_1 F_{\pm}(t) + 0\frac{{\rm d}F_{\pm}}{{\rm d}t},
\end{equation}
without the derivative term, as there is no change in the argument $t$.

\subsection{Form factor expansion for large $s$}  \label{S3.2}
It is well known \cite{Mc12} that the form factor expansions \eqref{1.4.9}--\eqref{eq:highformfac}, apart from the factor
$(1 - t)^{1/4}$, permit well defined scaling limits, with each term again expressed as a multiple integral with
increasing number of dimensions, now depending on $s$. In particular, in the scaling limit
\begin{gather}
\begin{aligned}\label{in1}
f_{N,N}^{(1)} & \to {e^{-2s} \over \pi} \int_0^\infty d X_1 \, e^{-2 s X_1}
\left[ {1 \over (1 + X_1) X_1} \right]^{1/2}, \\
f_{N,N}^{(2)} & \to {e^{-2s} \over \pi^2} \int_0^\infty d X_1 \,  \int_0^\infty d X_2 \,  e^{-2 s (X_1+X_2)}
\left[ {(1 + X_2) X_2 \over (1 + X_1) X_1} \right]^{1/2} {1 \over (1 + X_1 + X_2)^2}.
\end{aligned}
\end{gather}
In the original form factor expansions, successive terms contribute at higher order to the
small $t$ expansion; recall (\ref{2.8}). Now one can check that successive terms contribute
at higher order to the {\it large} $s$ expansion.

It is furthermore the case that the integrals in (\ref{in1}) can be evaluated in terms of $K_0$ and $K_1$
Bessel functions. It therefore follows that with \eqref{trueGdefn} we have
\begin{gather}
\begin{aligned}
 \lim_{N \to \infty} G^+(t;N) &= G^+_{0}(s)   \mathop{\sim}\limits_{s \to \infty} {1 \over \pi} K_0(s), \\
 \lim_{N \to \infty} G^-(t;N) &= G^-_{0}(s)  \mathop{\sim}\limits_{s \to \infty} 1 + {1 \over \pi^2}\Big[ s^2 (K_1(s)^2 - K_0(s)^2) - s K_0(s) K_1(s) + {1 \over 2} K_0(s)^2 \Big] .
\end{aligned}
\label{in2}
\end{gather}
 
The derivation of (\ref{in1}) from the integrals in (\ref{1.4.9}) involves the change of variables
$x_j = 1-(1 - t) X_j = 1-2s X_j/N$. This allows the integrals to be also expanded in a
$1/N$ series, with like powers of $1/N$ again having the property that successive terms contribute
at higher order to the large $s$ expansion. Extending (\ref{in2}) we have that for large $N$
\begin{equation}\label{2.15}
 G^{\pm}(t;N) = G^{\pm}_{0}(s) + \frac{1}{N}  G^{\pm}_{1}(s) + \frac{1}{N^2}  G^{\pm}_{2}(s) + {\rm O}(N^{-3}),
\end{equation}
where
\begin{gather}
\begin{aligned}
G^{+}_{1}(s) \mathop{\sim}\limits_{s \to \infty} & \frac{s}{2\pi} \lbrack K_0(s) -2s K_1(s) \rbrack, \\
G^{-}_{1}(s) \mathop{\sim}\limits_{s \to \infty} & -\frac{s^2}{\pi^2}\left[ s \left( K_0(s){}^2-K_1(s){}^2 \right)+K_1(s) K_0(s) \right] ,
\end{aligned}
\label{in3}
\end{gather}
and
\begin{gather}
\begin{aligned}
G^{+}_{2}(s) \mathop{\sim}\limits_{s \to \infty} & \frac{s}{24\pi} \left[ s(12s^2+13) K_0(s)-\left(32 s^2+1\right) K_1(s) \right], \\
G^{-}_{2}(s) \mathop{\sim}\limits_{s \to \infty} & -\frac{s}{24\pi^2} \left[ s(32s^2-1)\left( K_0(s){}^2-K_1(s){}^2 \right)+(20s^2+1) K_1(s) K_0(s) \right] .
\end{aligned}
\label{in4}
\end{gather}
For the second-order correction $ G^{(2)\pm}(s) $ in the large $ s $ regime we have at high temperature
\begin{equation}
  G^{(2)+}(s) \mathop{\sim}\limits_{s \to \infty} -\frac{s}{24\pi} \left[ 2 s K_0(s)+K_1(s) \right] ,
\label{G2-high-T_large-s}
\end{equation}
whereas in the low temperature regime
\begin{equation}
  G^{(2)-}(s) \mathop{\sim}\limits_{s \to \infty} -\frac{s}{24\pi^2}\left[ s(K_1(s){}^2-K_0(s){}^2)+K_1(s) K_0(s) \right] .
\label{G2-low-T_large-s}
\end{equation}
All the relations \eqref{G0-Id}--\eqref{G2-Id}, \eqref{trueG0}--\eqref{trueG2} also apply to these asymptotic formulae,
as \eqref{in3} follows from \eqref{in2} using \eqref{G1-Id} above and applying \eqref{trueG2},
\eqref{in4} becomes indeed \eqref{G2-high-T_large-s} and \eqref{G2-low-T_large-s}. 

In \cite{PA_2009} Perk and Au-Yang gave an expansion for the diagonal correlation in the high temperature regime, of the form
\begin{equation}
  C_{N,N} = \frac{t^{N/2}}{\sqrt{\pi N}(1-t)^{1/4}}\exp\bigg(\sum_{j=1}^m\sum_{s=0}^{\lfloor j/2\rfloor}\frac{p_{j,s}x^{j-2s}}{N^j}\bigg) ,
\end{equation}
truncated at values of $m=1,2,\ldots$, while defining
\begin{equation}
  x=\frac{1+t}{1-t} ,
\end{equation}
with computable coefficients $p_{j,s}$'s.
Similarly, there is an expansion for the dual diagonal correlation or equivalently the low temperature diagonal correlation (corrected from the original)
\begin{equation}
  C_{N,N}^{\ast} = (1-t)^{1/4}+\frac{t^{N+1}}{2\pi N^2 (1-t)^{7/4}}\exp\bigg(\sum_{j=1}^m\sum_{s=0}^{\lfloor j/2\rfloor}\frac{p^{\ast}_{j,s}x^{j-2s}}{N^j}\bigg) ,
\end{equation}
also truncated at values of $m=1,2,\ldots$, with the same $x$, and coefficients $p^{\ast}_{j,s}$.
The explicit results up to $m=10$ are 
{\small
\begin{multline}
C_{N,N} = \frac{k^N}{\sqrt{\pi N}(1-k^2)^{1/4}}
\exp\bigg(
-\frac{x}{8N}+\frac{(x^2-1)}{16N^2}-\frac{x(25x^2-27)}{384N^3}
\\
+\,\frac{(x^2-1)(13x^2-5)}{128N^4}-\frac{x(1073x^4-1830x^2+765)}{5120N^5}
\\
+\,\frac{(x^2-1)(412x^4-425x^2+61)}{768N^6}
\\
-\,\frac{x(375733x^6-886725 x^4+660723x^2-150003)}{229376N^7}
\\
+\,\frac{(x^2-1)(23797x^6-40211x^4+18055x^2-1385)}{4096N^8}
\\
-\,\frac{x(55384775x^8-167281524x^6+179965314x^4-79479684x^2+11415087)}{2359296N^9}
\\
+\,\frac{(x^2-1)(2180461x^8-5127404x^6+3945946x^4-1048244x^2+50521)}{20480N^{10}}
+\cdots\bigg) ,
\label{asyTg}
\end{multline}
}
and
{\small
\begin{multline}
C_{N,N}^{\ast} = (1-k^2)^{1/4}+\frac{k^{2N+2}}{2\pi N^2 (1-k^2)^{7/4}}
\exp\bigg(-\frac{7x}{4N}+\frac{17x^2-10}{8N^2}-\frac{901x^3-783x}{192N^3}
\\
+\,\frac{899x^4-1062x^2+194}{64N^4}-\frac{131411x^5-196770x^3+66375x}{2560N^5}
\\
+\,\frac{83591x^6-151767x^4+75033x^2-6730}{384N^6}
\\
-\,\frac{17052139x^7-36416187x^5+23770797x^3-4402125x}{16384N^7}
\\
+\,\frac{11282939x^8-27723492x^6+22515930x^4-6419700x^2+344834}{2048N^8}
\\
-\,\frac{37620804281x^9-104587369452x^7+101707083486x^5-39418182684x^3+4677930225x}{1179648N^9}
\\
+\,\frac{\displaystyle{2049064082x^{10}-6360721245x^8+7210080180x^6-3544939170x^4+670637250x^2-24119050}}{10240N^{10}}
\\
+\cdots\bigg) .
\label{asyTs}
\end{multline}
}
Because $ x $ diverges in the scaling limit the only way of matching the Perk and Au-Yang expansions with our own is through the large $s$ expansion.
This works because in each term of the above expansion the degree of $x$ in the numerator equals the degree of $N$ in the denominators.
Expanding \eqref{asyTg} and \eqref{asyTs} in a large $N$ expansion up to order $ N^{-2} $ we find that the coefficients of the leading order and 
the $ N^{-1} $, $ N^{-2} $ corrections exactly match the large $s$ asymptotics expansions of \eqref{in2}, \eqref{in3} and \eqref{in4} respectively in both the high and low temperature cases.

\subsection{Small $s$ expansion of $G^\pm(t;N)$} \label{S3.3}
It is known from \cite[Eq. (23)]{MM12} that the particular Painlev\'e VI $\tau$-function possesses a local expansion about $t=1$ such that
\begin{equation}
  C_{N,N}(t) = \langle s_{0,0}s_{N,N} \rangle = \sum^{N}_{p=0}\sum^{\infty}_{n=0} d^{(p,n)}\left( \log|t-1| \right)^{p}(t-1)^{p^2+n} .
\end{equation}
However, working directly from the Toeplitz form (\ref{2.4a}), and using a combination of numerical and analytic reasoning, 
a more refined expansion was given earlier in \cite[Eq.~(4.16), Eq.~(4.17)]{ONGP01}
\begin{multline}
  C_{N,N}(t) = C_{N,N}(1)\, t^{\mp 1/8}
\\
  \times \sum_{p=0}^\infty 4^p\left( \log|\tau|+\frac{1}{2}\psi(N+1)+\frac{1}{2}\psi(N)-\psi(1)-\log 4 \right)^{p}\left( \frac{1}{4} N\tau \right)^{p^2}
\\
  \times \prod^{p-1}_{k=1} (N^{-2}-k^{-2})^{p-k}
\\
  \times \left\{ 1+\frac{1}{8}\left[ 1+2(N^2-p^2) \right]\tau^2 + {\rm O}(\tau^3) \right\}.
\label{t=1_expansion}
\end{multline}
Here $ \tau := \mp\frac{1}{2}(t^{1/4}-t^{-1/4}) $ according to $ T>T_c $ (upper sign) or $ T<T_c $ (lower sign) 
and $ \psi(x) $ is the log-derivative of the Gamma function. The constant factor $C_{N,N}(1)$ has the evaluation (\ref{2.10}).
While \eqref{t=1_expansion} is not exact because it misses the "leading logarithm" caveat, namely with $ \tau^{p^2}\log^{p}|\tau| $ when $ \log^{p}|\tau| $ first appears,
we only require the terms $ a+b\tau\log|\tau|+c\tau $ arising from $ p=0,1 $.
The $ \tau\log|\tau| $ term was first given in \cite{KAP_1986}.
We note that such a general form was proposed in Eq. (2.27) on page 21 of Kong's thesis \cite{Kon_1987}. 
The first few terms close to criticality are in (2.67)-(2.78) there, adding the first five $|T-T_c|\log|T-T_c|$ corrections to the $T=T_c$ result.

The key result we require here is that each term appearing in the last factor of \eqref{t=1_expansion} at 
order $ \tau^q $ has a coefficient which is a polynomial of degree $ q $, i.e. is of order $ {\rm O}(N^q) $ as $ N\to \infty $. 
Following from the introduction of the scaling variable (\ref{scaling-var}),
two simple facts can be deduced. The first is that
\begin{equation}
   \tau = \pm \frac{s}{2N}\left( 1+\frac{s}{N}+\frac{11}{8}\frac{s^2}{N^2} \right) + {\rm O}(s^3) ,
\end{equation}
and the second is
\begin{equation}
   \log|\tau| = \log|s| - \log 2N + \frac{s}{N} + +\frac{7}{8}\frac{s^2}{N^2} + {\rm O}(s^3) .
\end{equation}
The latter equation implies that, upon the introduction of (\ref{scaling-var}), there are no $ \log N $ contributions due to the cancellation 
of $ -\log N $ with the leading order asymptotics of $ \psi(N) $, and therefore the large $ N $ expansion has only algebraic terms
in $ N $.

Assembling all these to compute $ N^{1/4} C_{N,N}(t) $ for small $s$ and large $N$ we note:
\begin{itemize}
\item 
 Only terms with $ p=0 $ and $ p=1 $ are needed to order $ {\rm O}(s^3) $.
\item
 the sub-leading corrections to $ C_{N,N}(1) $ are of order $ {\rm O}(N^{-3}) $, as seen from (\ref{2.11}), so we only require the first two orders.
\item
 the same applies to $ \prod^{p-1}_{k=1} (N^{-2}-k^{-2})^{p-k} $.
\item
 the only terms which remain to order $ {\rm O}(N^{-3}) $ and $ {\rm O}(s^3) $ are those from $ t^{\mp 1/8} $, 
 $ (N\tau)^{p^2} $ and $ (\log|\tau|+\ldots)^p $.
\end{itemize}

Considering these points together, recalling the definitions (\ref{trueGdefn}) and (\ref{2.12})
and using the notation $\gamma_E$ for Euler's constant, we obtain for the small $s$ expansion
\begin{multline}\label{3.13}
(2s)^{1/4}G^\pm(t;N) = A \left\{ 1 \pm \frac{1}{2}s\left[ \log|s|+\gamma_E-\log 8 \right] +\frac{1}{16}s^2 + {\rm O}(s^3) \right\}  
  \\
  + \frac{A}{N} \left\{ \pm \frac{1}{4}s \pm \frac{1}{2}s^2 \pm \frac{4\pm 1}{8}s^2\left[ \log|s|+\gamma_E-\log 8 \right]+ {\rm O}(s^3)  \right\} 
  \\
  + \frac{A}{N^2} \left\{ -\frac{1}{64} \mp \frac{1}{24}s \mp \frac{1}{128}s\left[ \log|s|+\gamma_E-\log 8 \right] + \frac{63 \pm 256}{1024}s^2 + {\rm O}(s^3) \right\} 
  + {\rm O}(N^{-3}) ,
\end{multline}
for $ T>T_c $ or $ T<T_c $.

\subsection{Scaling limit of $\sigma(t;N)$ function}\label{s3.4}
Recalling the definitions (\ref{2.7a}), (\ref{2.8a}), (\ref{trueGdefn}) and (\ref{2.15}) it follows that with
the scaling variable $s$ fixed 
\begin{equation}\label{3.14}
\sigma^{\pm}(t;N) = {\sigma}^{\pm}_0(s) + \frac{1}{N} {\sigma}^{\pm}_1(s) + \frac{1}{N^2} {\sigma}^{\pm}_2(s) + {\rm O}( N^{-3} ),
\end{equation}
where
\begin{align}
{\sigma}^{\pm}_0(s) &= \frac{1}{G_0^{\pm}(s)}s\dv{G_0^{\pm}(s)}{s} \quad \text{for } T\gtrless T_C ,
\label{eq:sigma0G}\\
{\sigma}^{\pm}_1(s) &=
\begin{cases}
 \frac{\displaystyle 1}{\displaystyle G_0^{+}(s)^{2}}
 \left[ \displaystyle -2s^2\dv{G_0^{+}(s)}{s}G_0^{+}(s) + sG_0^{+}(s)\dv{G_1^{+}(s)}{s} - s\dv{G_0^{+}(s)}{s}G_1^{+}(s) -\frac{s}{2} \right], T>T_C \\
 \frac{\displaystyle 1}{\displaystyle G_0^{-}(s)^{2}}
 \left[ \displaystyle -2s^2\dv{G_0^{-}(s)}{s}G_0^{-}(s) + sG_0^{-}(s)\dv{G_1^{-}(s)}{s} - s\dv{G_0^{-}(s)}{s}G_1^{-}(s) \right], T<T_C
\end{cases} .
\label{eq:sigma1G}
\end{align}
The equation (\ref{eq:sigma0G}) can be solved immediately for $G_0^\pm(s)$ to give
\begin{equation}
G_0^{\pm}(s) = \frac{A}{(2s)^{1/4}}\exp \Big ( \int_0^s \frac{{\sigma}_0^{\pm}(x)+\frac{1}{4}}{x} \, dx \Big ).
\label{3.17}
\end{equation}
And making use of (\ref{eq:sigma0G}) in (\ref{eq:sigma1G}) allows the latter to be solved for
$G_1^\pm(s)$,
\begin{gather}
\begin{aligned}
G_1^{-}(s) &= G_0^{-}(s) \int_0^s \Big (\frac{1}{x} {\sigma}^{-}_1(x) + 2 {\sigma}_0^{-}(x)\Big ) \, dx ,
\\
G_1^{+}(s) &= G_0^{+}(s) \left\{ \frac{1}{2}s + \int_0^s \Big (\frac{1}{x} {\sigma}^{+}_1(x) + 2 {\sigma}_0^{+}(x) \Big )\, dx \right\}.
\end{aligned}
\label{3.18}
\end{gather}
Here all constants of integration are chosen to be consistent with (\ref{3.13}).
The relations governing $G^{\pm}_2(s)$ are
\begin{multline}
  G^+_2(s) = G^+_0(s) \left\{ -\frac{1}{64} + \frac{5}{8}s^2 \right.
  \\ \left.
  + \int_0^s \Big( x^{-1}\sigma^+_2(x) + 3x\sigma^+_0(x) + x[\sigma^+_0(x)]^2 + \frac{5}{2}x^2 \sigma^{+'}_0(x) + x^2\sigma^+_0(x)\sigma^{+'}_0(x) \Big)\,dx \right\} ,
\label{highT_G2}
\end{multline}
and
\begin{multline}
  G^-_2(s) = G^-_0(s) \left\{ -\frac{1}{64} \right.
  \\ \left.
   + \int_0^s \Big( x^{-1}\sigma^-_2(x) + 2x\sigma^-_0(x) + x[\sigma^-_0(x)]^2 + 2x^2 \sigma^{-'}_0(x) + x^2\sigma^-_0(x)\sigma^{-'}_0(x) \Big)\,dx \right\}.
\label{lowT_G2}
\end{multline}
Recalling the definition \eqref{trueGdefn} and associated discussion we note the following relations between the 
$ G^{(2)\pm} $ and the $ G^{\pm}_{2} $
\begin{equation}
G^{(2)\pm}(s) = 
\begin{cases}   
\displaystyle G_2^{+}(s) - \frac{1}{2}s^4 \dv[2]{s} G^{+}_0(s) - \frac{11}{6}s^3 \dv{s} G^{+}_0(s) - \frac{5}{8}s^2 G^{+}_0(s), 	& T>T_C \\[2ex]              
\displaystyle G_2^{-}(s) - \frac{1}{2}s^4 \dv[2]{s} G^{-}_0(s) - \frac{4}{3}s^3 \dv{s} G^{-}_0(s),									& T<T_C \\ 
\end{cases} .             
\end{equation}

Moreover, the fact that $\sigma^{\pm}(t;N)$ satisfies the same $\sigma$ Painlev\'e VI equation (\ref{eq:pvi}) in both regimes allows
differential equation characterisations of ${\sigma}^{\pm}_0(s)$, ${\sigma}^{\pm}_1(s)$ and ${\sigma}^{\pm}_2(s)$ to be deduced as a corollary.
We will not indicate the high/low temperature regimes for ease of reading.

\begin{prop}\label{P1}
Introduce the scaling variable (\ref{scaling-var}) and suppose that the solution to ~(\ref{eq:pvi}) can be written in the form (\ref{3.14}). 
Then the leading order function ${\sigma}_0(s)$ satisfies the particular Painlev\'{e} V $\sigma$-form
\begin{align}
(s{\sigma}_0''(s))^2 = 4(s{\sigma}_0'(s) - {\sigma}_0(s))^2 - 4({\sigma}_0'(s))^2(s{\sigma}_0'(s)-{\sigma}_0(s)) + ({\sigma}_0'(s))^2 ,
\label{eq:pV}
\end{align}
and ${\sigma}_1(s)$ satisfies the second order inhomogeneous linear differential equation
\begin{equation}
   A(s){\sigma}_1''(s) + B(s){\sigma}_1'(s) + C(s){\sigma}_1(s) = D_1(s),
\label{Pode-1}
\end{equation}
where
\begin{align}
  A(s) 	& = \frac{1}{2}s^2 {\sigma}_0''(s),  \nonumber \\
  B(s) 	& = s({\sigma}_0'(s))^2 - 2s(s{\sigma}_0'(s) - {\sigma}_0(s)) + 2{\sigma}_0'(s)(s{\sigma}_0'(s) - {\sigma}_0(s) - 1/4),  \nonumber \\
  C(s) 	& = 2(s{\sigma}_0'(s) -{\sigma}_0(s)) - ({\sigma}_0'(s))^2,  \nonumber \\
  D_1(s) 	& = s^3({\sigma}_0''(s))^2 + 2{\sigma}_0'(s)(s{\sigma}_0'(s)-{\sigma}_0(s))(s{\sigma}_0'(s)-{\sigma}_0(s)-1/4).
\label{eq:D(s)}
\end{align}
Furthermore the second correction ${\sigma}_2(s)$ satisfies the second order inhomogeneous linear differential equation
\begin{equation}
   A(s){\sigma}_2''(s) + B(s){\sigma}_2'(s) + C(s){\sigma}_2(s) = D_2(s),
\label{Pode-2}
\end{equation}
where
\begin{multline}
  D_2(s) = 
     (s{\sigma}_0'(s)-{\sigma}_0(s))^2(s{\sigma}_0'(s)-{\sigma}_0(s)-1/4)
   + 2s^3{\sigma}_0'(s){\sigma}_0''(s)(s{\sigma}_0'(s)-{\sigma}_0(s)-1/4)
\\
   + (s^2{\sigma}_0''(s))^2(3+s^2+{\sigma}_0(s)-3s{\sigma}_0'(s))
   + \frac{1}{2}s^5{\sigma}_0''(s){\sigma}_0'''(s)
   - \frac{1}{4}s^6({\sigma}_0'''(s))^2 .
\label{eq:D2(s)}
\end{multline}
\end{prop}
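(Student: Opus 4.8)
The plan is to convert the $\sigma$-form of Painlev\'e VI \eqref{eq:pvi} from the variable $t$ to the scaling variable $s$ and then read off the three equations order by order in $1/N$. Since $s=N(1-t)/2$ we have $t=1-2s/N$ and $\frac{d}{dt}=-\frac{N}{2}\frac{d}{ds}$, so that, with primes denoting $s$-derivatives, $(t-1)\frac{d\sigma}{dt}=s\sigma'$, $t\frac{d\sigma}{dt}-\sigma=(s-\frac{N}{2})\sigma'-\sigma$, and $t(t-1)\frac{d^2\sigma}{dt^2}=(s^2-\frac{Ns}{2})\sigma''$. Substituting these into \eqref{eq:pvi} and clearing denominators turns the whole equation into a polynomial in $N$ of degree two, $N^2P_2[\sigma]+NP_1[\sigma]+P_0[\sigma]=0$, whose coefficients are explicit differential polynomials in $\sigma(s)$; in particular $P_2[\sigma]=(s\sigma'-\sigma)^2-(\sigma')^2(s\sigma'-\sigma-\frac14)-\frac{s^2}{4}(\sigma'')^2$ and $P_0[\sigma]=-s^4(\sigma'')^2$.

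Next I would insert the ansatz \eqref{3.14}, $\sigma=\sigma_0+N^{-1}\sigma_1+N^{-2}\sigma_2+{\rm O}(N^{-3})$, and collect like powers of $N$. At order $N^2$ only $\sigma_0$ survives and $P_2[\sigma_0]=0$; multiplying by $4$ and rearranging reproduces exactly \eqref{eq:pV}. At order $N^1$ the contribution is $DP_2[\sigma_0]\,\sigma_1+P_1[\sigma_0]=0$, where $DP_2[\sigma_0]$ is the Fr\'echet (linearised) derivative of $P_2$ at $\sigma_0$. A direct computation gives $DP_2[\sigma_0]\,\eta=-\bigl(A\eta''+B\eta'+C\eta\bigr)$ with $A,B,C$ precisely as in \eqref{eq:D(s)}, so this order yields the inhomogeneous linear equation \eqref{Pode-1} with $D_1=P_1[\sigma_0]$, again matching \eqref{eq:D(s)}.

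The order $N^0$ relation is the crux. Collecting it produces the \emph{same} operator acting on the next coefficient, namely $DP_2[\sigma_0]\,\sigma_2+\frac12D^2P_2[\sigma_0](\sigma_1,\sigma_1)+DP_1[\sigma_0]\,\sigma_1+P_0[\sigma_0]=0$, i.e.\ $A\sigma_2''+B\sigma_2'+C\sigma_2=D_2$ with $D_2=\frac12D^2P_2[\sigma_0](\sigma_1,\sigma_1)+DP_1[\sigma_0]\,\sigma_1+P_0[\sigma_0]$. Because of the second variation $D^2P_2$ this inhomogeneity is quadratic in $\sigma_1$ and so cannot be removed merely by invoking \eqref{Pode-1}. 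To reduce it to the purely $\sigma_0$ expression \eqref{eq:D2(s)} I would use the explicit particular solution $\sigma_1=s(s\sigma_0'-\sigma_0)$ of \eqref{Pode-1} --- whose validity is itself equivalent to the first $s$-derivative of \eqref{eq:pV}, and which encodes the triviality of the $N^{-1}$ correction. With this $\sigma_1$ one has the convenient identities $s\sigma_1'-\sigma_1=s^3\sigma_0''$ and $\sigma_1''=3s\sigma_0''+s^2\sigma_0'''$; substituting them shows that the third-derivative terms $\frac12s^5\sigma_0''\sigma_0'''$ and $-\frac14s^6(\sigma_0''')^2$ in \eqref{eq:D2(s)} descend directly from the $(\sigma_1'')^2$ and $\sigma_0''\sigma_1''$ contributions.

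The main obstacle is exactly this last reduction. After substituting the explicit $\sigma_1$, the resulting expression agrees with \eqref{eq:D2(s)} only modulo multiples of \eqref{eq:pV} and of its $s$-derivative; for instance the cubic term $(s\sigma_0'-\sigma_0)^2(s\sigma_0'-\sigma_0-\frac14)$ and the coefficient $3+s^2+\sigma_0-3s\sigma_0'$ multiplying $s^4(\sigma_0'')^2$ only emerge once the $(\sigma_0'')^2$ and $\sigma_0''\sigma_0'''$ combinations are traded for lower-order quantities via \eqref{eq:pV} and its derivative. Organising these eliminations consistently --- and verifying that every spurious $\sigma_1$-dependence cancels, so that $D_2$ is genuinely a function of $\sigma_0$ alone and hence regime-independent in form --- is a lengthy algebraic task that I would perform and cross-check with computer algebra. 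Once \eqref{Pode-1} and \eqref{Pode-2} are obtained in this form, the stated characterisations of $\sigma_0$, $\sigma_1$ and $\sigma_2$ follow.
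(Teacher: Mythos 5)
Your proposal is correct and takes essentially the same route as the paper's proof: substitute $s=N(1-t)/2$ and the ansatz \eqref{3.14} into \eqref{eq:pvi}, collect the coefficients of $N^2$, $N^1$, $N^0$ to obtain \eqref{eq:pV}, \eqref{Pode-1} and \eqref{Pode-2}, with the elimination of $\sigma_1$ at order $N^0$ via the particular solution $\sigma_1=s^2\sigma_0'-s\sigma_0$ being exactly the step the paper records immediately after the proposition (and justifies, as you do, by the identity that the residual of \eqref{Pode-1} is $\tfrac14 s^2$ times the $s$-derivative of the residual of \eqref{eq:pV}). One small point: your anticipated obstacle does not actually arise --- using $s\sigma_1'-\sigma_1=s^3\sigma_0''$ and $\sigma_1''=3s\sigma_0''+s^2\sigma_0'''$, the combination $\tfrac12 D^2P_2[\sigma_0](\sigma_1,\sigma_1)+DP_1[\sigma_0]\,\sigma_1+P_0[\sigma_0]$ reproduces \eqref{eq:D2(s)} identically, with no further reduction modulo \eqref{eq:pV} or its derivative required.
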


\begin{proof}
We begin with the $\sigma(t;N)$ Painlev\'{e} VI equation (\ref{eq:pvi}) and substitute the proposed form 
$\sigma(t;N) = {\sigma}_0(s) + \frac{1}{N}{\sigma}_1(s) + \frac{1}{N^2}{\sigma}_2(s)$ 
and the scaled variable $s = N(1-t)/2$ to replace $t$. 
Expanding the first, second and third terms respectively gives
\begin{align*}
&[t(t-1)\sigma''(t;N)]^2 = \frac{1}{4}[s {\sigma}''_0(s)]^2 N^2 + [\frac{1}{2}s^2 {\sigma}_0''(s) {\sigma}''_1(s) - s^3{\sigma}''_0(s)^2] N + {\rm O}(1) ,
\\
&\begin{aligned}
N^2 [(t-1)\sigma'_N(t;N) - \sigma(t;N)]^2 =\; &[s{\sigma}'_0(s) - {\sigma}_0(s)]^2 N^2 
\\
& + 2 [s{\sigma}'_0(s) - {\sigma}_0(s)] [s{\sigma}'_1(s) - {\sigma}_1(s)] N + {\rm O}(1) ,
\end{aligned}
\\
&\begin{aligned}
4\sigma'_N(t;N)&[(t-1) \sigma'_N(t;N) - \sigma(t;N) - 1/4] [t\sigma'_N(t;N) - \sigma(t;N)] =
\\
&   {\sigma}'_0(s)^2 [s{\sigma}'_0(s) - {\sigma}_0(s) - 1/4] N^2 + \Big( 2 {\sigma}'_0(s){\sigma}'_1(s)[s{\sigma}'_0(s)-{\sigma}_0-1/4]
\\
& + {\sigma}'_0(s)^2[s{\sigma}'_1(s) - {\sigma}_1(s)] - 2{\sigma}'_0(s)[s{\sigma}'_0(s)-{\sigma}_0(s)][s{\sigma}'_0(s)-{\sigma}_0(s)-1/4] \Big) N 
\\
&+ {\rm O}(1) .
\end{aligned}
\end{align*}
Comparing the coefficients of ${\rm O}(N^2)$ and ${\rm O}(N)$ produces (\ref{eq:pV}) and  (\ref{Pode-1}) respectively.
Eq. \eqref{Pode-2} with \eqref{eq:D2(s)} follows from taking the above working to the next order.
\end{proof}

As a consequence of \eqref{G1-Id} we observe
\begin{equation}
  {\sigma}_{1}^{\pm} = s^2\frac{d}{ds}{\sigma}_{0}^{\pm} - s{\sigma}_{0}^{\pm} .
\end{equation}
We now directly verify the claim that $ {\sigma}_{1}^{\pm} $ as given above is a solution to Eq. \eqref{Pode-1}. 
If given some functions $y_0(s)$, $y_1(s)$ such that $y_1=s^2y_0'-sy_0$ and let
\begin{align*}
	P_0(s)&=(sy_0'')^2-4(sy_0'-y_0)^2+4(y_0')^2(sy_0'-y_0)-(y_0')^2
\\
	P_1(s)&=A(s)y_1''+B(s)y_1'+C(s)y_1-D_1(s) ,
\end{align*}
(i.e. the solutions to the equations $P_0(s)=0$ and $P_1(s)=0$ are ${\sigma}_0$ and ${\sigma}_1$), it is easy to confirm the relation 
\begin{align}
	P_1(s)=\frac{1}{4}s^2\frac{d}{ds}P_0(s) .
\end{align}

We remark that the characterisation of ${\sigma}_0(s)$ by the non-linear equation (\ref{eq:pV}) was
first obtained by Jimbo and Miwa \cite{JM80,JM81}. Also, making use of (\ref{3.13}) in
(\ref{3.17}), and \eqref{highT_G2} with \eqref{lowT_G2} gives the $s \to 0$ boundary conditions
\begin{equation}\label{3.14a}
{\sigma}^{\pm}_0(s) \mathop{\sim}\limits_{s \to 0} - {1 \over 4} \pm (1 + L(s)){s \over 2}, \qquad
L(s)  := \log \Big( \frac{s}{8} \Big) + \gamma_E ,
\end{equation}
and 
\begin{equation}\label{3.14b}
{\sigma}^{\pm}_2(s) \mathop{\sim}\limits_{s \to 0} \mp\frac{s}{24}.
\end{equation}

The boundary conditions (\ref{3.14a}) and (\ref{3.14b}) suggest seeking series solutions of
(\ref{eq:pV}) and (\ref{Pode-2}) of the form
\begin{equation}\label{3.14c}
{\sigma}^{\pm}_0(s) = \sum_{n=0}^{\infty} \sum_{m=0}^n c^{\pm}_{m,n} L(s)^m s^n, \qquad
{\sigma}^{\pm}_2(s) = \sum_{n=0}^\infty \sum_{m=0}^n k^{\pm}_{m,n} L(s)^m s^n,
\end{equation}
subject to the initial conditions
\begin{equation}\label{3.14d}
c^{\pm}_{0,0} = -{1 \over 4}, \quad c^{\pm}_{1,1} = \pm {1 \over 2}, \quad k^{\pm}_{0,0} = 0, \quad k^{\pm}_{0,1} = \mp\frac{1}{24}.
\end{equation}
Substituting these forms gives recurrences for the unknown coefficients, and these are found to have a unique solution, given (\ref{3.14c}). 
The $-1/64$ term in \eqref{3.13} does not appear as part of the initial conditions but we will need it later in a subsequent calculation.
We find, up to order $s^4$,
\begin{equation}
{\sigma}_0^{\pm}(s) = -\frac{1}{4} \pm \Big( \frac{1}{2}+\frac{1}{2}L(s) \Big)s + \Big( \frac{1}{8}-\frac{1}{4}L(s)-\frac{1}{4}L(s)^2 \Big)s^2
                           \pm \Big( \frac{1}{8}L(s)^2 + \frac{1}{8}L(s)^3 \Big)s^3 + \cdots ,
\end{equation}
up to order $s^5$
\begin{equation}
{\sigma}_1^{\pm}(s) = \frac{1}{4}s \pm \frac{1}{2}s^2 + \Big(-\frac{1}{8} - \frac{3}{4}L(s) - \frac{1}{4}L(s)^2 \Big)s^3 
                          \pm \Big( \frac{1}{4}L(s) + \frac{5}{8}L(s)^2 + \frac{1}{4}L(s)^3 \Big)s^4 + \cdots ,
\end{equation}
and, up to order $s^3$
\begin{equation}
{\sigma}_2^{\pm}(s) = \mp\frac{1}{24}s + \Big( \frac{1}{12} + \frac{1}{24}L(s) \Big)s^2 
                          \pm \Big( \frac{3}{8} - \frac{5}{96}L(s) - \frac{1}{32}L(s)^2 \Big)s^3 + \cdots .
\end{equation}
From the first of these it follows from (\ref{3.17}) that up to terms of order $s^4$, and with $\tilde{A} = 2^{-1/4} A$ (recall the definition of $A$ from (\ref{2.12}))
\begin{equation}\label{g01}
  G_0^{\pm}(s) = \frac{\tilde{A}}{s^{1/4}} \Big( 1 + \frac{1}{2}L(s)(\pm s) + \frac{1}{16}s^2 + \frac{1}{32}L(s)(\pm s)^3 + \cdots \Big) ,
\end{equation}
which is consistent with a result presented in \cite[Eq.~(8.10)]{AP01}. 
In  \cite[Eq.~(8.10)]{AP01} the expansion for $G_0^+(s)$ is given up to and including terms of order $s^{10}$. 
Extending our expansion to that order gives agreement; the corresponding expansion
of $G_0^-(s)$ is obtained by simply replacing $s$ by $-s$ in all terms except the factor of $1/s^{1/4}$.
From the second we find that up to terms of order $s^5$,
\begin{equation}\label{g02}
G_1^{\pm}(s) = \frac{\tilde{A}}{s^{1/4}} \Big( \pm \frac{1}{4}s \pm \Big( \frac{1}{2} +\frac{4\pm 1}{8}L(s) \Big)s^2 + \frac{8\pm 1}{64}s^3
               \pm\Big( \frac{1}{32} + \frac{12\pm 1}{128}L(s) \Big)s^4 + \cdots \Big).
\end{equation}
And lastly the initial terms up to order $s^5$, where the integration constant $-1/64$ appears in \eqref{highT_G2} and \eqref{lowT_G2}
\begin{equation}
   G^{+}_2(s) = \frac{A}{(2s)^{1/4}}\left( - \frac{1}{64} - \frac{1}{24}s - \frac{1}{128}sL(s) + \frac{319}{1024}s^2 + \frac{395}{384}s^3 + \frac{1919}{2048}s^3L(s) + \ldots  \right) ,
\label{G2_highT_start}
\end{equation}
and 
\begin{equation}
   G^{-}_2(s) = \frac{A}{(2s)^{1/4}}\left( - \frac{1}{64} + \frac{1}{24}s + \frac{1}{128}sL(s) - \frac{193}{1024}s^2 - \frac{299}{384}s^3 - \frac{895}{2048}s^3L(s) + \ldots \right) .
\label{G2_lowT_start}
\end{equation}
Of course generating these series to (much) higher order is straightforward using computer algebra software. 
In Appendix \ref{AppA} we record the expansions of $G^{(2)\pm}(s)$ up to and including order $s^{15}$.

\section{Comparison with numerical data}\label{S4}
 
With $N$ and $t$ varied so that the scaling variable (\ref{scaling-var}) is fixed, to be consistent with
(\ref{trueGdefn}) and (\ref{2.15}) we must be able to expand
\begin{equation}\label{5.1}
C_{N,N}(t) = {a(s) \over N^{1/4}} + {b(s) \over N^{5/4}} + {c(s) \over N^{9/4}} + \cdots ,
\end{equation}
where
\begin{equation}\label{5.2}
a(s) = (2s)^{1/4} G_0^\pm(s), \qquad b(s) = (2s)^{1/4} G_1^\pm(s), \qquad c(s) = (2s)^{1/4} G_2^\pm(s).
\end{equation}
On the other hand, numerical values of $C_{N,N}(t)$ for particular $N$ and $t$ can readily be
computed from the Toeplitz determinant formula (\ref{2.5}) using the ${}_2 F_1$ form of the entries
(\ref{F1}) and (\ref{F2}). If we fix $s$, then numerical values of the first three coefficients in (\ref{5.1})
can be estimated by truncating (\ref{5.1}) to the first three terms, choosing three distinct (large) $N$
values, and solving for $a(s), b(s), c(s)$. 
In practice the data appearing in Figures \ref{FigG2_verysmall}, \ref{FigG2_small}, \ref{FigG2_large} was interpolated from \eqref{5.1} 
(with an additional term $d(s)/N^{13/4}$) using the values $N=97$, $98$, $99$, $100$ over some fixed interval for $s$.
That the results are accurate is evidenced by their stability
upon repeating this procedure with different choices of three $N$ values. This means, using (\ref{5.2}),
that we have available numerical values against which we can compare our theoretical predictions.

\begin{figure}[h]
    \centering
    \begin{subfigure}[t]{0.5\textwidth}
        \centering
        \includegraphics[height=48mm]{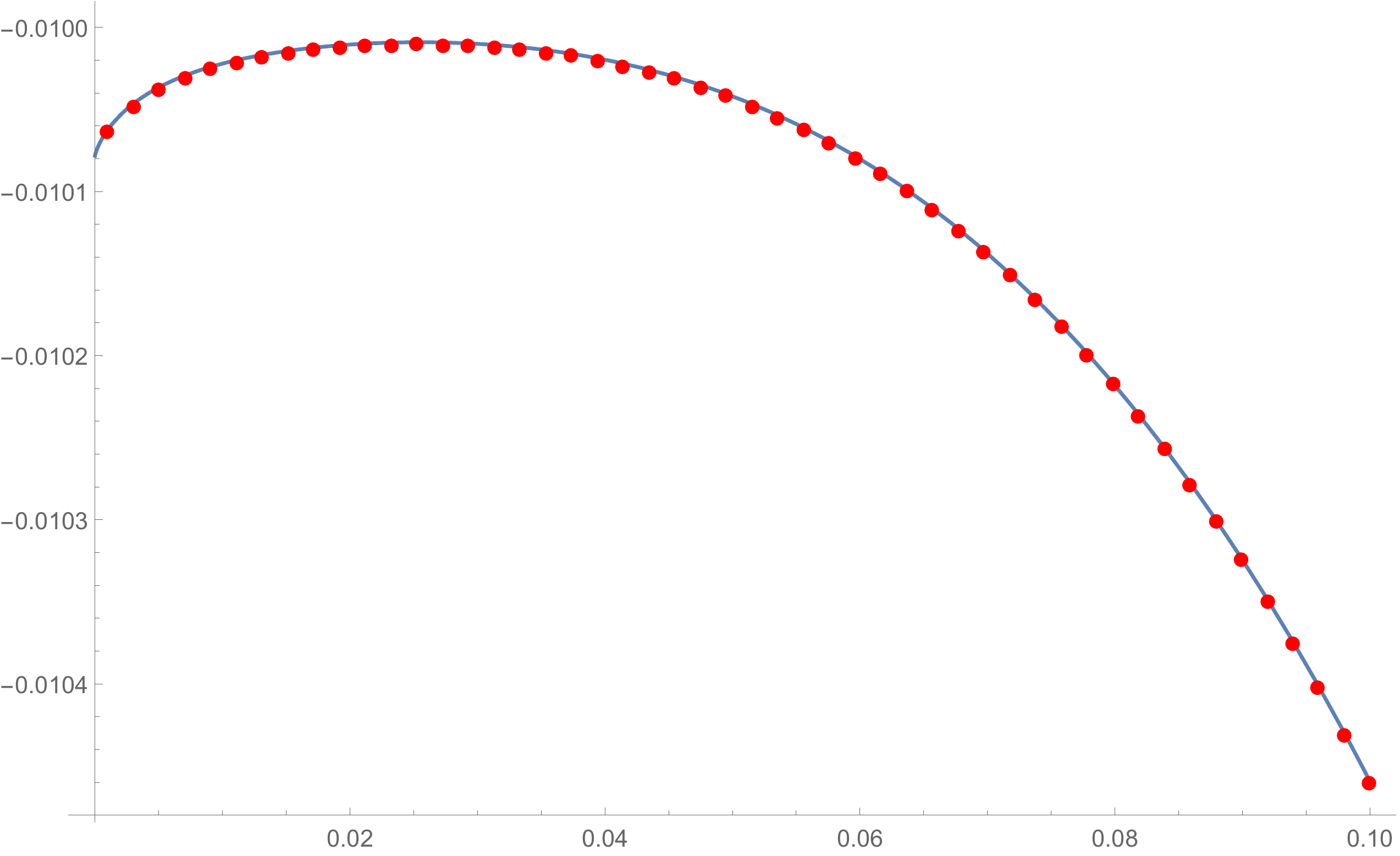}
        \caption{$T > T_C$}
    \end{subfigure}%
    ~ 
    \begin{subfigure}[t]{0.5\textwidth}
        \centering
        \includegraphics[height=48mm]{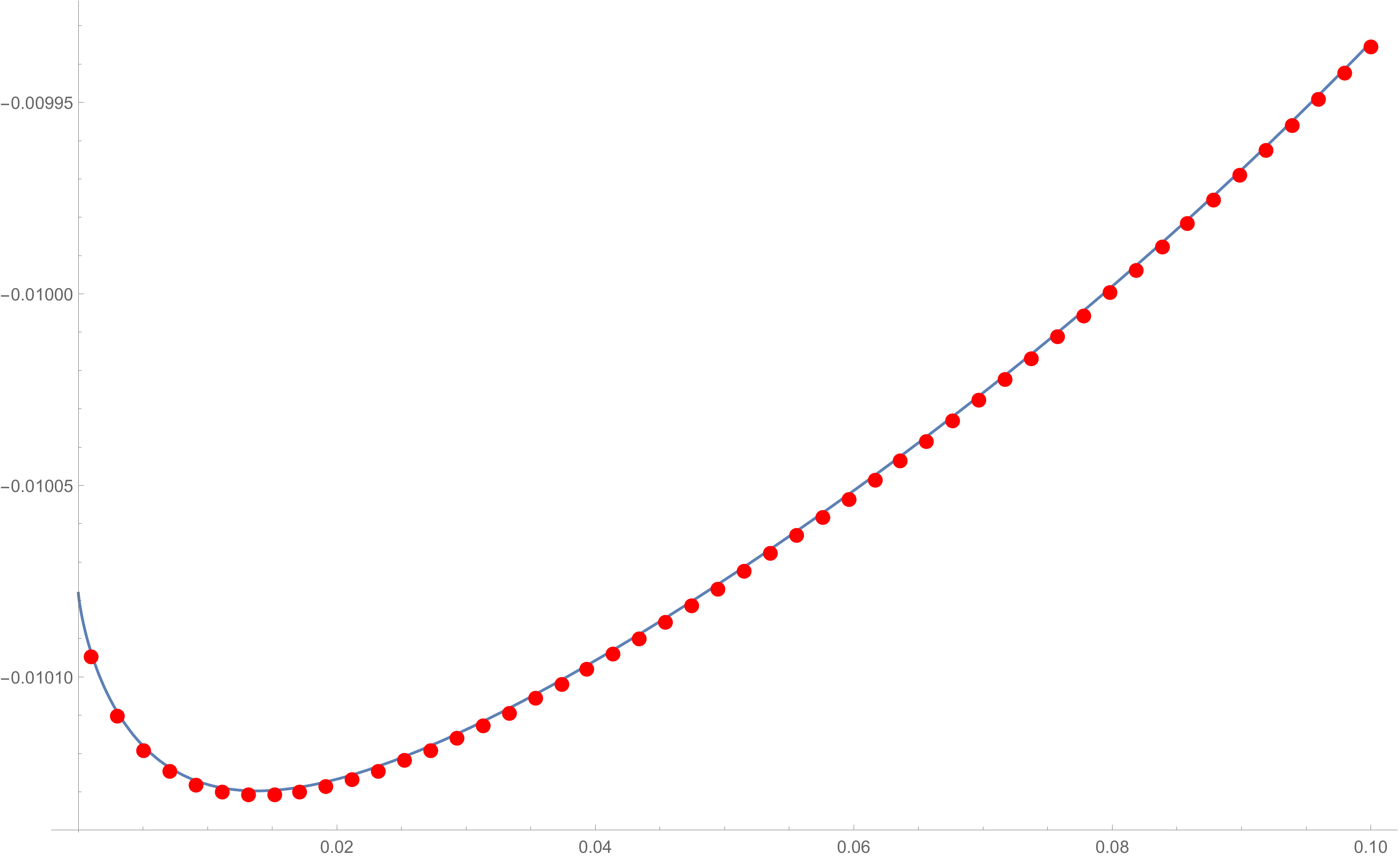}
        \caption{$T < T_C$}
    \end{subfigure}
    \caption{Plots of $(2s)^{1/4}G^+_2(s)$ (left-hand panel) and $(2s)^{1/4}G^-_2(s)$ (right-hand panel) for small $s$ on the interval $[0,0.1]$. 
    Numerical results are the points (red). The series solution (blue) is up to order $s^{20}$.}
    \label{FigG2_verysmall}
\end{figure}

\begin{figure}[h]
    \centering
    \begin{subfigure}[t]{0.5\textwidth}
        \centering
        \includegraphics[height=48mm]{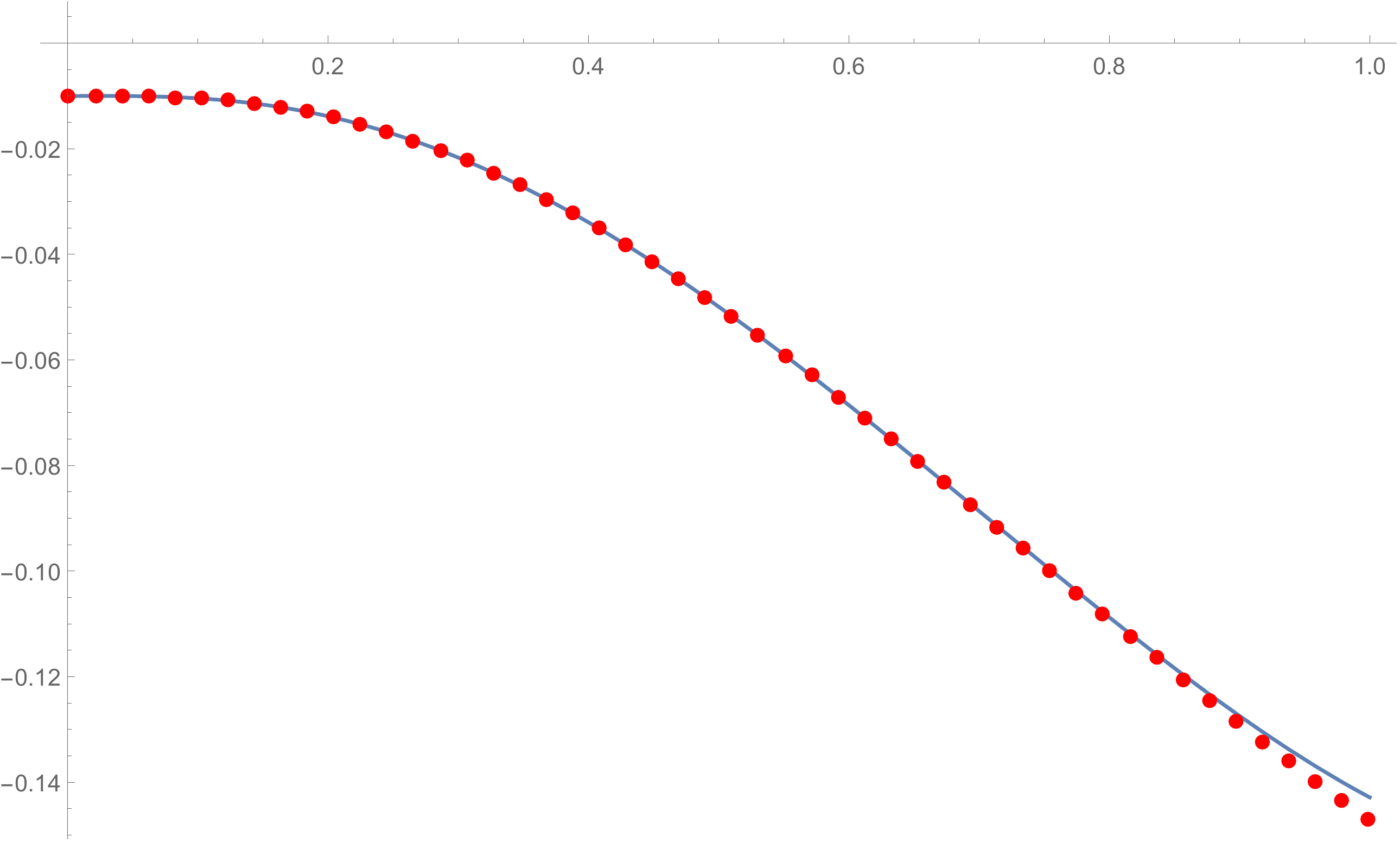}
        \caption{$T > T_C$}
    \end{subfigure}%
    ~ 
    \begin{subfigure}[t]{0.5\textwidth}
        \centering
        \includegraphics[height=48mm]{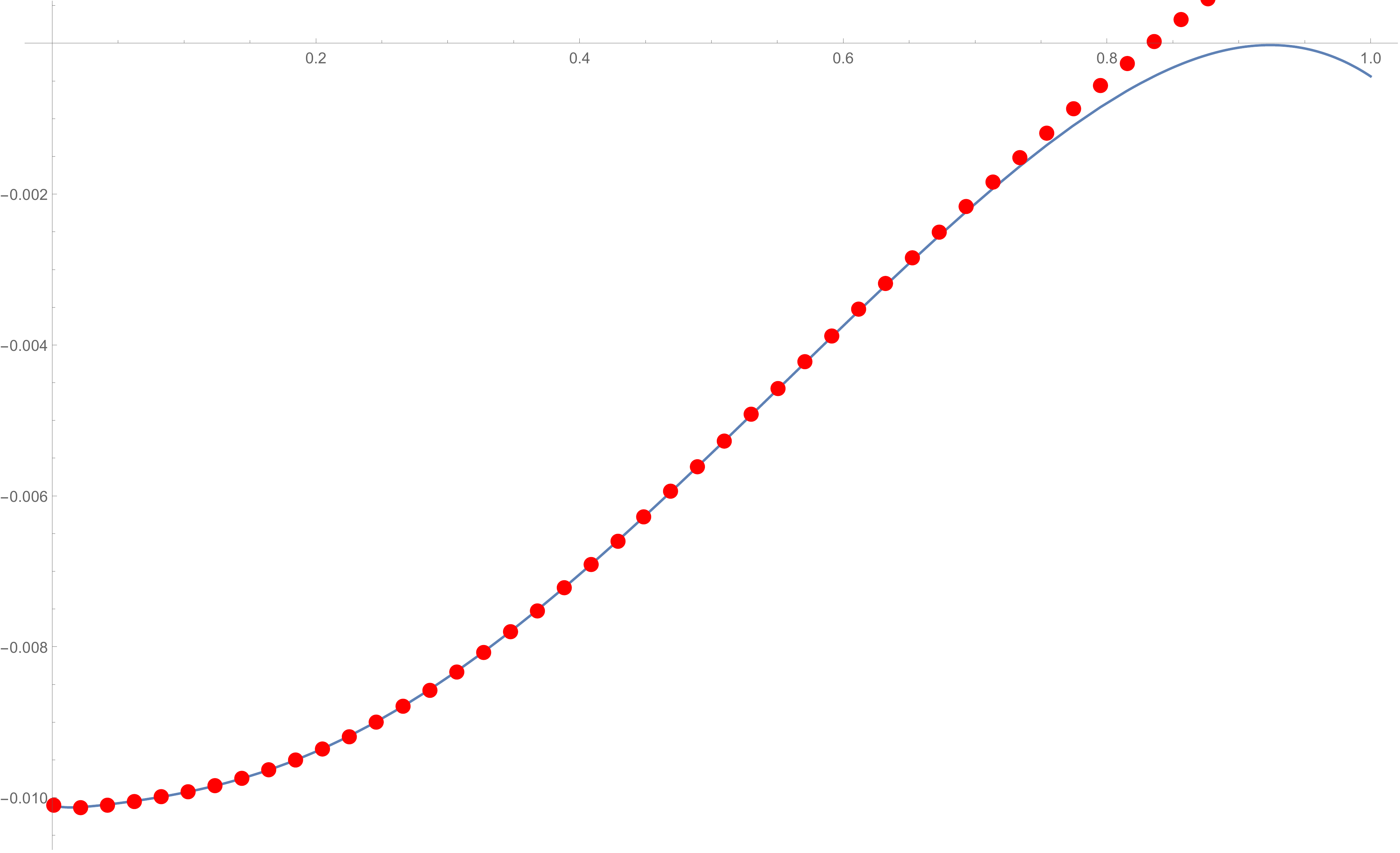}
        \caption{$T < T_C$}
    \end{subfigure}
    \caption{Plots of $(2s)^{1/4}G^+_2(s)$ (left-hand panel) and $(2s)^{1/4}G^-_2(s)$ (right-hand panel) for small $s$ on the interval $[0,1]$. 
    Numerical results are the points (red). The series solution (blue) is up to order $s^{20}$.}
    \label{FigG2_small}
\end{figure}

\begin{figure}[h]
    \centering
    \begin{subfigure}[t]{0.5\textwidth}
        \centering
        \includegraphics[height=48mm]{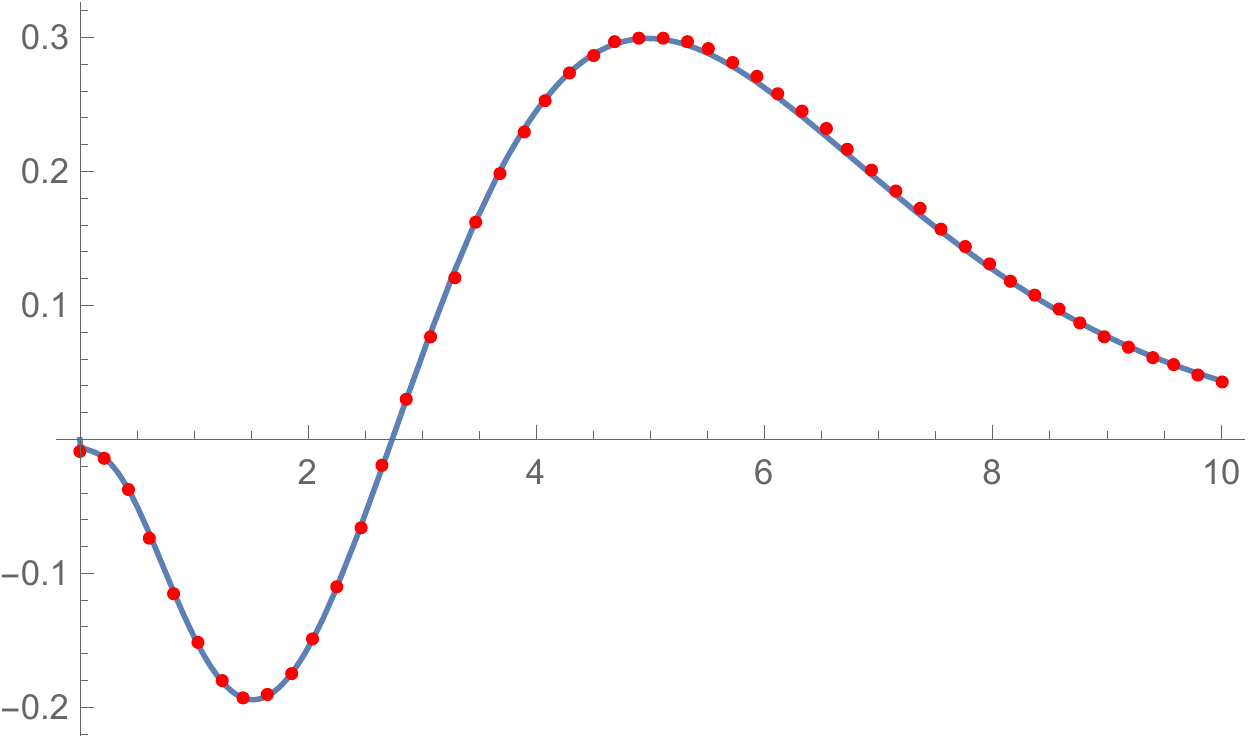}
        \caption{$T > T_C$}
    \end{subfigure}%
    ~ 
    \begin{subfigure}[t]{0.5\textwidth}
        \centering
        \includegraphics[height=48mm]{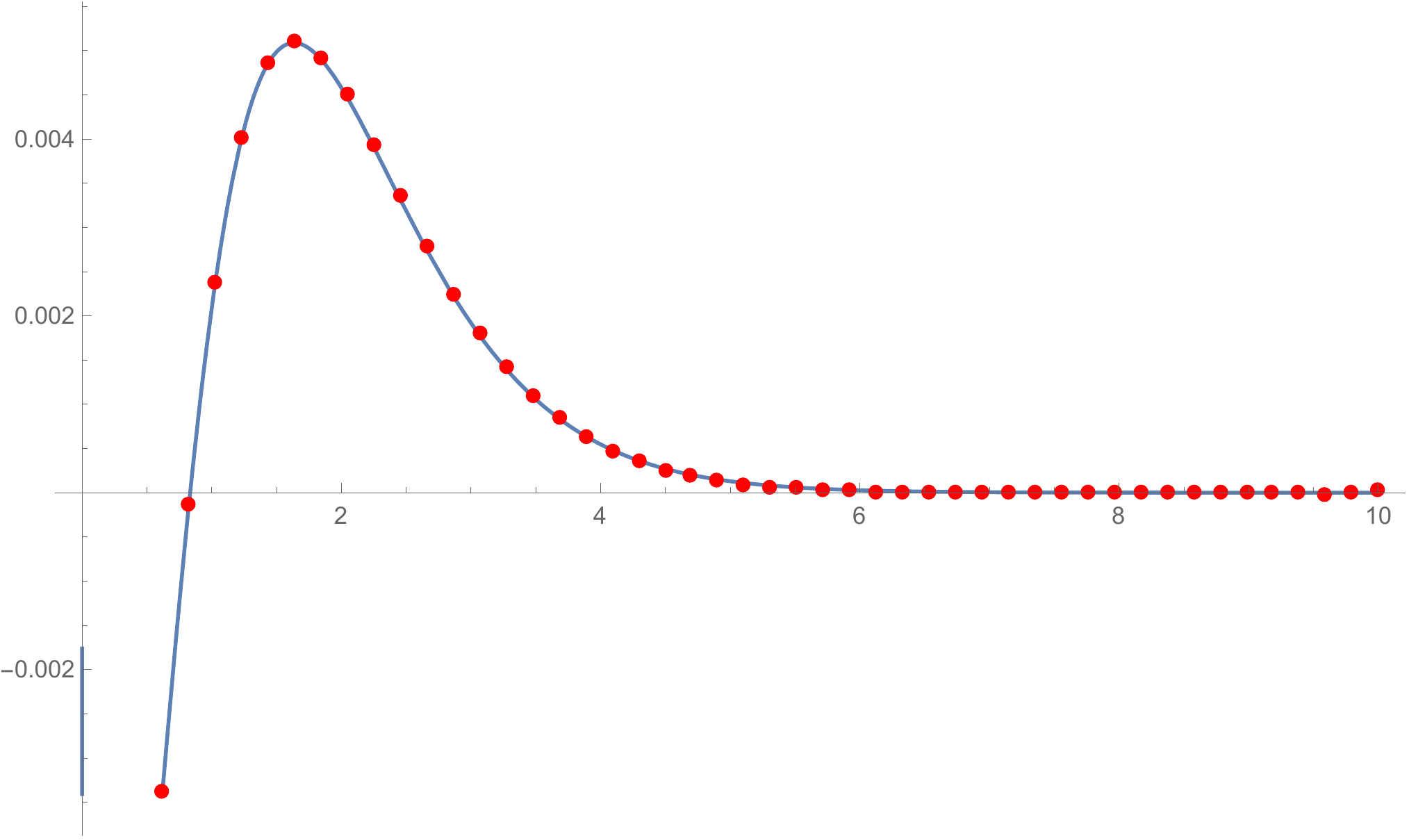}
        \caption{$T < T_C$}
    \end{subfigure}
    \caption{Plots of $(2s)^{1/4}G^+_2(s)$ (left-hand panel) and $(2s)^{1/4}G^-_2(s)$ (right-hand panel) for large $s$. 
    Numerical results are the points (red). The asymptotic solution is the solid line (blue).}
    \label{FigG2_large}
\end{figure}

The large $s$ forms of $G_0^\pm$, $G_1^\pm$ and $G_2^\pm$ are given by (\ref{2.15}), (\ref{in3}) and (\ref{in4}) respectively.
Consider first $G_0^\pm$. It has previously been noticed by Au-Yang and Perk \cite{AP01}
that using (\ref{2.15}) for $G_0^\pm(s)$ in the regime $s > 11.4$ $(G^+)$ and $s > 10.2$ $(G^-)$, 
and the small $s$ power series expansion up to and including terms of order $s^{10}$ otherwise, 
gives remarkable accuracy at the cross-over point of better than $5 \times 10^{-24}$ and $2 \times 10^{-31}$ respectively.
In relation to our numerical data, we find an accurate fit of the large $s$ asymptotic expression for both $G_0^\pm$ for the range $s > 0.4$, and thus well into the small $s$ regime.

The large $s$ asymptotic forms of $G_1^\pm$ and $G_2^\pm$ share the same property as that of $G_0^\pm$ as being valid inside the small $s$ regime. 
To see this, our numerical data for $c(s)$ gives graphical agreement with $(2s)^{1/4}G_2^\pm(s)$ down to as small as $s=0.3$ and $0.1$ 
with a relative error of approximately $1\%$ for high and low temperatures respectively. 
Whereas the small $s$ series expansions of $(2s)^{1/4}G_2^\pm(s)$ (see Figure \ref{FigG2_small}) agree with our data to the fifth decimal for values $0<s<0.6$. 
It can be now observed that there is region of overlap between the large $s$ asymptotics and the small $s$ series expansion of $G_2^\pm$. 
Combining these results would therefore give the overall picture to the correction term of the correlation function in the scaling limit.
	
\section{Concluding Remarks}\label{S5}
Thus far we have presented four types of evidence for the first non-trivial, large $N$ correction to the diagonal correlations of the square lattice Ising model under 
scaling at criticality and furthermore given precise characterisations of this correction. 
We wish to emphasise that each type is actually founded on completely independent logic from the others yet are consistent with each other.
To summarise:
\begin{enumerate}[label=(\alph*)]
\item\label{summary-1} 
  From the PVI $\sigma$-form \eqref{eq:pvi} we have shown the existence of $ N^{-1}, N^{-2} $ 
  corrections which satisfy the linear, inhomogeneous second-order differential equation \eqref{Pode-1}, \eqref{Pode-2} along with \eqref{eq:D(s)}.
\item\label{summary-2}
  From the form factor expansion of the diagonal correlations given by \eqref{2.12}, \eqref{2.13} and \eqref{scaling-var} (due to Lyberg and McCoy \cite{LM07})
  we have performed a large $ N $ expansion combined with the scaling towards criticality. 
  This has yielded a large $s$ expansion in the zeroth order and the first two orders of correction, with exponential suppression of contributions from higher form factors. 
  Consequently one only needs $ f^{(1)} $ for $ T>T_C $ and $ f^{(2)} $ for $ T<T_C $ to obtain the leading orders and we have given explicit evaluations of these.
\item\label{summary-3}
  Employing the expansion for the diagonal correlation for finite $ N $ about $ t=1 $, or what amounts to an expansion of a particular $\tau$-function 
  of the PVI system with half-integer monodromy and confluent logarithmic monodromy data, given in \eqref{t=1_expansion} we deduced small $s$ expansions
  for the zeroth order and first two corrections in \eqref{3.13}. This result furnished more than just the leading term in the small $s$ expansion, and clearly
  indicates a more interesting exact functional form for the correction. This result was derived in \cite{ONGP01} using the Toeplitz determinant form for
  the correlations and the explicit evaluation of their elements as Gauss hypergeometric functions.
\item\label{summary-4}
  Finally we have made first-principles computations of the Toeplitz determinant form for the correlation with large $ N $ and have accurately entered the 
  scaling regime by performing a $N$ extrapolation on the basis of the known asymptotic dependence of the correlations on $ N $ (see \eqref{5.1}). 
  These numerical computations have vindicated the solutions of the differential equation in \ref{summary-1} and both the large $s$ and small $s$ expansions
  in \ref{summary-2} and \ref{summary-3} respectively.      
\end{enumerate}

\section*{Acknowledgements} 
This research project is part of the program of study supported by the ARC Centre of Excellence for Mathematical \& Statistical Frontiers.
We are grateful for the insights and comments by Barry McCoy.

\appendix
\section{Appendix A}\label{AppA}
According to \eqref{3.17}, \eqref{highT_G2} and \eqref{lowT_G2}, knowledge of the power series form of 
${\sigma}_0^\pm(x)$ and ${\sigma}_2^\pm(x)$ allows us to deduce the power series form of $G_0^\pm(s)$ and $G_2^\pm(s)$. 
The former are deduced by substituting the functional forms (\ref{3.14c}) in the coupled differential system of Proposition \ref{P1}.
With the initial conditions (\ref{3.14d}), a triangular system for the unknown coefficients is obtained, allowing the sought power series to be obtained to high order. 
The first few terms are recorded in \eqref{g01}, \eqref{G2_highT_start} and \eqref{G2_lowT_start} above, 
where it was noted that the explicit form of the expansion of $G_0^+(s)$ up to and including terms of order $s^{10}$ is given. 
The power series expansion of $G_0^-(s)$ follows from this by replacing $s$ by $-s$ in all terms except the overall factor of $1/s^{1/4}$. 
In conclusion we express our results for the non-trivial correction as $ G^{(2)+}(s) $ and $ G^{(2)-}(s) $, 
rather than those for $G_2^+(s)$ and $G_2^-(s)$, 
not only because they are more fundamental but because they are simpler, possessing manifest $ s\mapsto -s $ symmetry and have smaller coefficients.
We find in the small $ s $ regime at high temperature
{\small 
\begin{multline}
\frac{(2s)^{1/4}}{A}G^{(2)+}(s) \sim 
-{\frac{1}{64}} + \left( -\frac{1}{24}-{\frac {L}{128}} \right) s - {\frac {35}{3072}}\,{s}^{2} + \left( -{\frac{5}{384}}-{\frac {33\,L}{2048}} \right) {s}^{3}
\\
+ \left( -{\frac{835}{393216}}+{\frac {29\,L}{49152}}+{\frac {65\,{L}^{2}}{16384}} \right) {s}^{4} 
+ \left( -{\frac{397}{786432}}-{\frac {1615\,L}{786432}} \right) {s}^{5} + \left( -{\frac{347}{6291456}-{\frac {131\,L}{786432}}+{\frac {161\,{L}^{2}}{262144}}} \right) {s}^{6}
\\
+ \left(  {\frac{569}{37748736}}-{\frac {525\,L}{4194304}} \right) {s}^{7} 
+ \left(  {\frac{70655}{38654705664}}-{\frac {15145\,L}{805306368}}+{\frac {15283\,{L}^{2}}{402653184}} \right) {s}^{8}
\\
+ \left(  {\frac{481433}{309237645312}}-{\frac {33563\,L}{8589934592}}-{\frac {1381\,{L}^{2}}{1073741824}}+{\frac {385\,{L}^{3}}{536870912}} \right) {s}^{9}
\\
+ \left(  {\frac{452251}{3092376453120}}-{\frac {36059\,L}{38654705664}}+{\frac {9139\,{L}^{2}}{6442450944}} \right) {s}^{10}
\\
+ \left(  {\frac{1080733}{24739011624960}}-{\frac {191519\,L}{6184752906240}}-{\frac {6799\,{L}^{2}}{51539607552}}+{\frac {1763\,{L}^{3}}{25769803776}} \right) {s}^{11}
\\
+ \left(  {\frac{1279915}{237494511599616}-{\frac {612827\,L}{19791209299968}}+{\frac {42065\,{L}^{2}}{1099511627776}}} \right) {s}^{12}
\\
+ \left(  {\frac{6327259}{17099604835172352}}+{\frac {16246021\,L}{7124835347988480}}-{\frac {375181\,{L}^{2}}{59373627899904}}+{\frac {30821\,{L}^{3}}{9895604649984}} \right) {s}^{13}
\\
+ \left(  {\frac{13056079}{79798155897470976}}-{\frac {3808213\,L}{4749890231992320}}+{\frac {130009\,{L}^{2}}{158329674399744}} \right) {s}^{14}
\\
+ \left( -{\frac{4464773}{354658470655426560}+{\frac {5981813\,L}{53198770598313984}}-{\frac {184645\,{L}^{2}}{949978046398464}}+{\frac {14573\,{L}^{3}}{158329674399744}}} \right) {s}^{15} ,
\end{multline}
}
whereas in the low temperature regime
{\small 	
\begin{multline}
\frac{(2s)^{1/4}}{A} G^{(2)-}(s) \sim 
-{\frac{1}{64}} + \left( \frac{1}{24}+{\frac {L}{128}} \right) s - {\frac {35}{3072}}\,{s}^{2} + \left( {\frac{5}{384}}+{\frac {33\,L}{2048}} \right) {s}^{3}
\\
+ \left( -{\frac{835}{393216}}+{\frac {29\,L}{49152}}+{\frac {65\,{L}^{2}}{16384}} \right) {s}^{4}
+ \left(  {\frac{397}{786432}}+{\frac {1615\,L}{786432}} \right) {s}^{5} + \left( -{\frac{347}{6291456}}-{\frac {131\,L}{786432}}+{\frac {161\,{L}^{2}}{262144}} \right) {s}^{6}
\\
+ \left( -{\frac{569}{37748736}}+{\frac {525\,L}{4194304}} \right) {s}^{7} 
+ \left(  {\frac{70655}{38654705664}}-{\frac {15145\,L}{805306368}}+{\frac {15283\,{L}^{2}}{402653184}} \right) {s}^{8}
\\ 
+ \left( -{\frac{481433}{309237645312}}+{\frac {33563\,L}{8589934592}}+{\frac {1381\,{L}^{2}}{1073741824}}-{\frac {385\,{L}^{3}}{536870912}} \right) {s}^{9}
\\ 
+ \left(  {\frac{452251}{3092376453120}}-{\frac {36059\,L}{38654705664}}+{\frac {9139\,{L}^{2}}{6442450944}} \right) {s}^{10}
\\ 
+ \left( -{\frac{1080733}{24739011624960}}+{\frac {191519\,L}{6184752906240}}+{\frac {6799\,{L}^{2}}{51539607552}}-{\frac {1763\,{L}^{3}}{25769803776}} \right) {s}^{11}
\\ 
+ \left(  {\frac{1279915}{237494511599616}}-{\frac {612827\,L}{19791209299968}}+{\frac {42065\,{L}^{2}}{1099511627776}} \right) {s}^{12}
\\ 
+ \left( -{\frac{6327259}{17099604835172352}}-{\frac {16246021\,L}{7124835347988480}}+{\frac {375181\,{L}^{2}}{59373627899904}}-{\frac {30821\,{L}^{3}}{9895604649984}} \right) {s}^{13}
\\ 
+ \left(  {\frac{13056079}{79798155897470976}}-{\frac {3808213\,L}{4749890231992320}}+{\frac {130009\,{L}^{2}}{158329674399744}} \right) {s}^{14}
\\ 
+ \left(  {\frac{4464773}{354658470655426560}}-{\frac {5981813\,L}{53198770598313984}}+{\frac {184645\,{L}^{2}}{949978046398464}}-{\frac {14573\,{L}^{3}}{158329674399744}} \right) {s}^{15} .
\end{multline}	
}
We have abbreviated $ L(s):=L $.

\providecommand{\bysame}{\leavevmode\hbox to3em{\hrulefill}\thinspace}
\providecommand{\MR}{\relax\ifhmode\unskip\space\fi MR }
\providecommand{\MRhref}[2]{%
  \href{http://www.ams.org/mathscinet-getitem?mr=#1}{#2}
}
\providecommand{\href}[2]{#2}

\end{document}